\title{Fixation Maximization in the Positional Moran Process}
\author{
Joachim Brendborg,\textsuperscript{\rm 1}
Panagiotis Karras,\textsuperscript{\rm 1}
Andreas Pavlogiannis,\textsuperscript{\rm 1}\\
Asger Ullersted Rasmussen,\textsuperscript{\rm 1}
Josef Tkadlec\textsuperscript{\rm 2}
}
\def\mypath{}
\def\O{\mathcal{O}}
\def\bigO{\mathcal{O}}
\def\Pr{\mathbb{P}}
\def\Nats{\mathbb{N}}
\def\R{\mathbb{R}}
\def\d{\operatorname{d}}
\def\dfp{\fp'(G^S,0)}
\DeclareMathOperator*{\argmax}{arg\,max}
\newtheorem{theorem}{Theorem}
\newtheorem{corollary}{Corollary}
\newcommand{\TotalFitness}{\mathsf{F}}
\newcommand{\Fitness}{\mathsf{f}}
\newcommand{\X}{\mathsf{X}}
\newcommand{\fp}{\operatorname{fp}}
\newcommand{\ft}{\operatorname{T}}
\newcommand{\E}{\mathbb{E}}
\newcommand{\Temp}{\mathcal{T}}
\newcommand{\Traj}{\mathcal{T}}
\newcommand{\Deltafp}{\operatorname{\Delta fp}}
\newcommand{\NodeActivationMoran}{\texttt{FM}}
\newcommand{\NodeActivationMoranStrong}{\texttt{FM}^{\infty}}
\newcommand{\NodeActivationMoranWeak}{\texttt{FM}^{0}}
\newcommand{\Opt}{OPT}
\newcommand{\Paragraph}[1]{{\smallskip\noindent\bf #1}}
\newcommand{\SubParagraph}[1]{{\smallskip\noindent\em #1}}
\newcommand{\FitAdv}{\delta}
\newcommand{\NP}{\operatorname{NP}}
\newcommand{\NPH}{\operatorname{NP-hard}}
\newcommand{\PTime}{\operatorname{PTime}}
\newcommand{\Weight}{w}
\newcommand\numberthis{\addtocounter{equation}{1}\tag{\theequation}}
\newcommand{\MC}{\mathcal{M}}
\begin{document}
\maketitle

\begin{abstract}
The Moran process is a classic stochastic process that models spread dynamics on graphs.
A single ``mutant'' (e.g., a new opinion, strain, social trait etc.) invades a population of ``residents'' scattered over the nodes of a graph.
The mutant fitness advantage $\FitAdv\geq 0$ determines how aggressively mutants propagate to their neighbors.
The quantity of interest is the fixation probability, i.e., the probability that the initial mutant eventually takes over the whole population.
However, in realistic settings, the invading mutant has an advantage only in certain locations.
E.g., the ability to metabolize a certain sugar is an advantageous trait to bacteria only when the sugar is actually present in their surroundings.
In this paper, we introduce the \emph{positional Moran process}, a natural generalization in which the mutant fitness advantage is only realized on specific nodes called \emph{active} nodes, and study the problem of \emph{fixation maximization}:~given a budget~$k$, choose a set of~$k$ active nodes that maximize the fixation probability of the invading mutant.
We show that the problem is $\NPH$, while the optimization function is not submodular, indicating strong computational hardness. 
We then focus on two natural limits:
at the limit of $\FitAdv\to\infty$ (\emph{strong selection}), although the problem remains $\NPH$, the optimization function becomes submodular and thus admits a constant-factor greedy approximation; at the limit of~$\FitAdv\to 0$ (\emph{weak selection}), we show that we can obtain a tight approximation in 
$O(n^{2\omega})$ time, where
$\omega$ is the matrix-multiplication exponent. An experimental evaluation of the new algorithms along with some proposed heuristics corroborates our results.
\end{abstract}

\section{Introduction}

Several real-world phenomena are described in terms of an \emph{invasion process} that occurs in a network and follows some \emph{propagation model}. 
The problems considered in such settings aim to optimize the success of the invasion, quantified by some metric. 
For example, social media campaigns serving purposes of marketing, political advocacy, and social welfare are modeled in terms of \emph{influence spread}, whereby the goal is to maximize the expected influence spread of a message as a function of a set of initial adopters~\cite{Kempe2003, Youze:2014, Borgs:2014:MSI, Tang:2015:IMN, Zhang2020}; see~\cite{IM-Survey} for a survey. 
Similarly, in rumor propagation~\cite{Demers1987}, the goal is to minimize the number of rounds until a rumor
spreads to the whole population~\cite{Fountoulakis2012}.

A related propagation model is that of the Moran process, introduced as a model of genetic evolution~\cite{Moran1958}, and the variants thereof, such as the discrete Voter model~\cite{Clifford1973,Liggett1985, Antal2006, Talamali2021}. 
A single \emph{mutant} (e.g., a new opinion, strain, social trait, etc.) invades a population of \emph{residents} scattered across the nodes of the graph. 
Contrariwise to influence and rumor propagation models, the Moran process accounts for \emph{active} resistance to the invasion:~a node carrying the mutant trait may not only forward it to its neighbors, but also lose it by receiving the resident trait from its neighbors. 
This model exemplifies settings in which individuals may switch opinions several times or sub-species competing in a gene pool.
A key parameter in this process is the \emph{mutant fitness advantage}, a real number~$\FitAdv\geq 0$ that expresses the intensity by which mutants propagate to their neighbors compared to residents. Large~$\FitAdv$ favors mutants, while~$\FitAdv=0$ renders mutants and residents indistinguishable.

Success in the Moran process is quantified in terms of the \emph{fixation probability}, i.e., the probability that a randomly occurring mutation takes over the whole population. 
As the focal model in evolutionary graph theory~\cite{Lieberman2005}, the process has spawned extensive work on identifying \emph{amplifiers}, i.e., graph structures that enhance the fixation probability~\cite{Monk2014,Giakkoupis16,Galanis2017,Pavlogiannis2018,Tkadlec2021}. 
Still, in the real world, the probability that a mutation spreads to its neighborhood is often affected by its position:
the ability to metabolise a certain sugar is advantageous to bacteria only when that sugar is present in their surroundings;
likewise, people are likely to spread a viewpoint more enthusiastically
if it is supported by experience from their local environment (country, city, neighborhood). 
By default, the Moran process does not account for such positional effects.
While optimization questions have been studied in the Moran and related models~\cite{EvenDar2007}, the natural setting of positional advantage remains unexplored. 
In this setting, the question arises: to which positions (i.e., nodes) should we confer an advantage so as to maximize the fixation probability?

\Paragraph{Our contributions.}
We define a \emph{positional} variant of the Moran process, in which the mutant fitness advantage is only realized on a subset of nodes called \emph{active} nodes and an associated optimization problem, \emph{fixation maximization} ($\NodeActivationMoran$): given a graph~$G$, a mutant fitness advantage~$\FitAdv$ and a budget~$k$, choose a set of~$k$ nodes to activate that maximize the fixation probability~$\fp(G^S,\FitAdv)$ of mutants. We also study the problems~$\NodeActivationMoranStrong$ and~$\NodeActivationMoranWeak$ that correspond to~$\NodeActivationMoran$ at the natural limits~$\FitAdv\to\infty$ and~$\FitAdv\to 0$, respectively. 
\emph{On the negative side}, we show that $\NodeActivationMoran$ and $\NodeActivationMoranStrong$ are~$\NPH$ (\cref{subsec:np_hardness}). One common way to circumvent $\NP$-hardness is to prove that the optimization function is submodular; however, we show that $\fp(G^S,\FitAdv)$ is not submodular when~$\FitAdv$ is finite (\cref{subsec:non_submodularity}). 
\emph{On the positive side}, we first show that the fixation probability $\fp(G^S,\FitAdv)$ on unidrected graphs admits a FPRAS, that is, it can be approximated to arbitrary precision in polynomial time (\cref{subsec:approx}). 
We then show that, in contrast to finite~$\FitAdv$, the function $\fp^\infty(G^S)$ is submodular on undirected graphs and thus  $\NodeActivationMoranStrong$ admits a polynomial-time constant-factor approximation (\cref{subsec:positive_strong}). 
Lastly, regarding the limit~$\FitAdv\to0$, we show that $\NodeActivationMoranWeak$ can be solved in polynomial time on any graph (\cref{subsec:positive_weak}). 
Overall, while $\NodeActivationMoran$ is hard in general, we obtain tractability in both natural limits $\FitAdv\to\infty$ and $\FitAdv\to0$.

Due to space constraints, some proofs are in the~\cref{sec:appendix}. 

\section{Preliminaries}
We extend the standard Moran process to its positional variant and define our fixation maximization problem on it.
\subsection{The Positional Moran Process}\label{subsec:moran_process}

\Paragraph{Structured populations.} Standard evolutionary graph theory~\cite{Nowak2006} describes a population structure by a directed weighted graph (network)~$G=(V,E, \Weight)$ of~$|V|=n$ nodes, where~$\Weight$ is a weight function over the edge set~$E$ that defines a probability distribution on each node~$u$, i.e., for each~$u\in V$, $\sum_{(u,v)\in E}\Weight(u,v)=1$.
We require that~$G$ is strongly connected when~$E$ is projected to the support of the probability distribution of each node. Nodes represent sites (locations) and each site is occupied by a single agent. Agents are of two types --- \emph{mutants} and \emph{residents} --- and agents of the same type are indistinguishable from each other.
For ease of presentation, we may refer to a node and its occupant agent interchangeably. 
In special cases, we will consider simple undirected graphs, meaning that (i)~$E$ is symmetric and (ii)~$\Weight(u,\cdot)$ is uniform for every node~$u$.

\SubParagraph{Active node sets and fitness.} Mutants and residents are differentiated in \emph{fitness}, which in turn depends on a set~$S \!\subseteq\! V$ of \emph{active nodes}. 
Consider that mutants occupy the nodes in~$\X \!\subseteq\! V$. 
The \emph{fitness} of the agent at node~$u \!\in\! V$ is:
\begin{align}
\Fitness_{\X,S}(u)=
\begin{cases}
1+\FitAdv, & \text{if } u\in \X\cap S\\
1,         & \text{if } u\not \in \X\cap S
\end{cases}
\end{align}
where $\FitAdv\geq 0$ is the \emph{mutant fitness advantage}. 
Intuitively, fitness expresses the capacity of a type to spread in the network.
The resident fitness is normalized to~$1$, while~$\FitAdv$ measures the relative competitive advantage conferred by the mutation. 
However, the mutant fitness advantage is not realized uniformly in the graph, but only at active nodes. 
The total fitness of the population is $\TotalFitness_{S}(\X) = \sum_{u}\Fitness_{\X,S}(u)$.

\Paragraph{The positional Moran process.} We introduce the \emph{positional Moran process} as a discrete-time stochastic process $(\X_t)_{t\geq 0}$, where each $\X_t\subseteq V$ is a random variable denoting the set of nodes of $G$ occupied by mutants. Initially, the whole population consists of residents; at time~$t=0$, a single mutant appears uniformly at random. That is, $\X_0$ is given by $\Pr[\X_0=\{u\}]=1/n$ for each $u\in V$. In each subsequent step, we obtain $\X_{t+1}$ from $\X_{t}$ by executing the following two stochastic events in succession.
\begin{enumerate}
\item \emph{Birth event:} A single agent is selected for reproduction with probability proportional to its fitness, i.e., an agent on node~$u$ is chosen with probability $\Fitness_{\X_t, S}(u)/\TotalFitness_{S}(\X_{t})$.
\item \emph{Death event:} A neighbor~$v$ of~$u$ is chosen with probability~$\Weight(u,v)$; its agent is replaced by a copy of the one at~$u$.
\end{enumerate}
If~$u$ is a mutant and~$v$ a resident, then mutants spread to~$v$, hence~$\X_{t+1} = \X_t \cup \{v\}$; likewise, if~$u$ is a resident and~$v$ a mutant, then~$\X_{t+1} = \X_t \setminus \{v\}$; otherwise $u$ and~$v$ are of the same type, and thus~$\X_{t+1}=\X_{t}$.

\SubParagraph{Related Moran processes.} 
Our positional Moran process generalizes the standard Moran process on graphs~\cite{Nowak2006}
that forms the special case of~$S \!=\! V$ (that is, the mutant fitness advantage holds on the whole graph).
Conversely, it can be viewed a special case of the Moran process with two environments~\cite{Kaveh2019}.
It is different from models in which agent's fitness depends on which agents occupy the neighboring nodes
(so-called frequency dependent fitness)~\cite{Huang2010}.

\Paragraph{Fixation probability.} 
If $\exists\tau \ge 0$ such that $X_{\tau}=V$, we say that the mutants \emph{fixate} in~$G$, otherwise we say they \emph{go extinct}. 
As~$G$ is strongly connected, for any initial mutant set~$A\subseteq V$, the process reaches a homogeneous state almost surely, i.e.,~$\lim_{t\to\infty} \Pr[\X_{t}\in \{\emptyset, V\}] = 1$. 
The \emph{fixation probability (FP) of such a set~$A$ is}~$\fp(G^S,\FitAdv, A) = \Pr[\exists\tau \ge 0\colon \X_{\tau} \!=\! V \mid \X_{0} \!=\! A]$. 
Thus, the FP of the mutants in~$G$ with active node set~$S$ and fitness advantage~$\FitAdv$ is:
\begin{align}
\fp(G^S,\FitAdv)=\frac{1}{n}\sum_{u\in V}\fp(G^S,\FitAdv, \{u\})
\end{align}
For any fixed~$G$ and~$S$, the function~$\fp(G^S,\FitAdv)$ is continuous in~$\FitAdv$ and infinitely differentiable: the underlying Markov chain is finite and its transition probabilities are rational functions in~$\FitAdv$, so~$\fp(G^S,\FitAdv)$ too is a rational function in $\FitAdv$, and since the process is defined for any~$\FitAdv\in[0,\infty)$, $\fp(G^S, \FitAdv)$ has no points of discontinuity for~$\delta\in[0,\infty)$.

\SubParagraph{Computing the fixation probability.} 
In the neutral setting where $\FitAdv \!=\! 0$, $\fp(G^S,\FitAdv) = \frac{1}{n}$ for any graph~$G$~\cite{Broom2010}.
When~$\FitAdv>0$, the complexity of computing the fixation probability is unknown even when~$S \!=\! V$; it is open whether the function can be computed in~$\PTime$. For undirected~$G$ and~$S \!=\! V$, the problem has a fully polynomial randomized approximation scheme (FPRAS)~\cite{Diaz2014, Chatterjee2017, Goldberg2020}, as the expected absorption time of the process is polynomial and thus admits fast Monte-Carlo simulations. 
We also rely on Monte-Carlo simulations for computing~$\fp(G^S,\FitAdv)$, and show that the problem admits a FPRAS for any~$S \!\subseteq\! V$ when~$G$ is undirected.

\begin{figure}[!t]
\includegraphics[width=0.95\linewidth]{\mypath 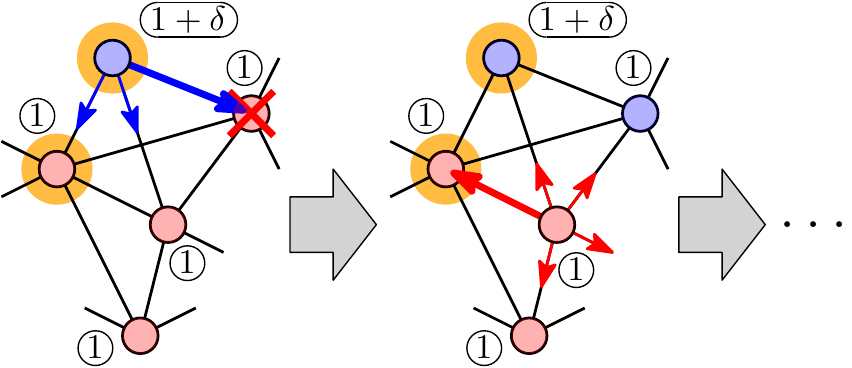}
\caption{Steps in the positional Moran process. Blue (red) nodes are occupied by mutants (residents). Yellow rings indicate active nodes that realize the mutant fitness advantage.}
\label{fig:moran}
\end{figure}

\subsection{Fixation Maximization via Node Activation}\label{subsec:node_activation}

We study the optimization problem of \emph{fixation maximization} in the positional Moran process. We focus on three variants based on different regimes of the mutant fitness advantage~$\FitAdv$.

\Paragraph{Fixation maximization.} Consider a graph~$G \!=\! (V,E)$, a mutant fitness advantage~$\FitAdv$ and a budget~$k \!\in\! \Nats$. The problem \emph{Fixation Maximization for the Positional Moran Process} $\NodeActivationMoran(G,\FitAdv,k)$ calls to determine an active-node set of size~$\leq k$ that maximizes the fixation probability:
\[
S^*=\argmax_{S\subseteq V, |S|\leq k} \fp(G^S,\FitAdv)
\numberthis\label{eq:optimization}
\]
As we will argue, $\fp(G^S,\FitAdv)$ is monotonically increasing in~$S$, hence the condition~$|S| \!\leq\! k$ can be replaced by~$|S| \!=\! k$. The associated decision problem is: Given a budget~$k$ and a threshold~$p\in [0,1]$, determine whether there exists an active-node set~$S$ of size~$|S|=k$ such that~$\fp(G^S,\FitAdv)\geq p$. As an illustration, consider a cycle graph on $n=50$ nodes and two strategies: one that activates~$k$ ``spaced'' nodes, and another that activates~$k$ ``contiguous'' nodes. As \cref{fig:cycle} shows, depending on~$\delta$ and~$k$, either strategy could be better.

\Paragraph{The regime of strong selection.} Next, we consider $\NodeActivationMoran$ in the limit $\FitAdv\to\infty$. The corresponding fixation probability is: 
\begin{align}
\fp^{\infty}(G^S)=\lim_{\FitAdv\to\infty}\fp(G^S,\FitAdv)
\end{align}
As we show later, $\fp(G^S,\FitAdv)$ is monotonically increasing in~$\FitAdv$, and since it is also bounded by~1, the above limit exists. The corresponding optimization problem $\NodeActivationMoranStrong(G,k)$ asks for the active node set~$S^*$ that maximizes $\fp^{\infty}(G^S)$:
\[
S^*=\argmax_{S\subseteq V, |S|\leq k} \fp^{\infty}(G^S)
\numberthis\label{eq:optimization_strong}
\]

\begin{figure}[!t]
\includegraphics[width=0.95\linewidth]{\mypath 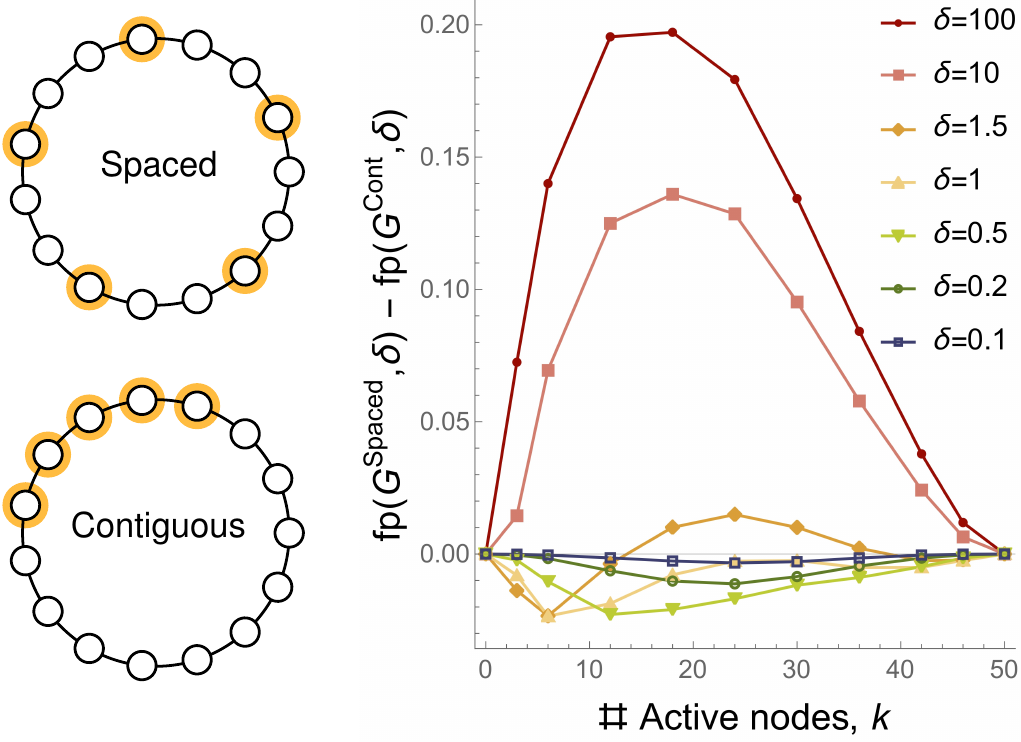}
\caption{FP differences on a cycle graph on $n=50$ nodes, under spaced and contiguous activating strategy. E.g. for $\delta=100$ and $k=18$ we have $\fp(G^{\text{Spaced}},\delta)\doteq 0.62$ and $\fp(G^{\text{Cont}},\delta)\doteq 0.43$, so the difference is almost $20\%$.}
\label{fig:cycle}
\end{figure}

\Paragraph{The regime of weak selection.} Further, we consider the problem $\NodeActivationMoran(G,\FitAdv,k)$ in the limit $\FitAdv\to0$. For small positive~$\FitAdv$, different sets~$S$ generally give different fixation probabilities~$\fp(G^S,\FitAdv)$, see~\cref{fig:tristar}. For fixed~$S \!\subseteq\! V$, we can view~$\fp(G^S,\FitAdv)$ as a function of~$\FitAdv$. 
By the Taylor expansion of~$\fp(G^S,\FitAdv)$, and given that~$\fp(G^S,0)=\sfrac1n$, we have
\begin{align}
\fp(G^S,\FitAdv) = \frac{1}{n} + \FitAdv\cdot \dfp  +\O(\FitAdv^2),
\end{align}
where $\dfp \!=\! \frac{\d}{\d\FitAdv}\Bigr|_{\substack{\FitAdv=0}}\fp(G^S,\FitAdv)$. 
Since lower-order terms~$\O(\FitAdv^2)$ tend to~$0$ faster than~$\FitAdv$ as~$\FitAdv \!\to\! 0$, on a sufficiently small neighborhood of~$\FitAdv \!=\! 0$, maximizing~$\fp(G^S,\FitAdv)$ reduces to maximizing the derivative~$\dfp$ (up to lower order terms). The corresponding optimization problem~$\NodeActivationMoranWeak(G,k)$ asks for the set~$S^*$ that maximizes~$\dfp$:
\[
S^*=\argmax_{S\subseteq V, |S|\leq k} \dfp
\numberthis\label{eq:optimization_weak}
\]
As $\fp(G^S,\FitAdv)\geq \sfrac1n$ for all~$S$, our choice of~$S$ maximizes the \emph{gain} of fixation probability, i.e., the difference~$\Deltafp(G^S,\FitAdv) = \fp(G^S,\FitAdv) - \frac{1}{n}$. The set~$S^*$ of~\cref{eq:optimization_weak} guarantees a gain with relative approximation factor tending to~$1$ as~$\FitAdv \to 0$, i.e., we have the following (simple) lemma.

\begin{restatable}{lemma}{lem:weakselapproxguarantee}\label{lem:weak_sel_approx_guarantee}
$\frac{\Deltafp(G^{S^{\Opt}},\FitAdv)}{\Deltafp(G^{S^*},\FitAdv)}=1+\O(\FitAdv)$.
\end{restatable}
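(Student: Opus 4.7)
The plan is to derive the bound directly from the Taylor expansion stated just before the lemma, combined with the two optimality conditions that define $S^{\Opt}$ and $S^*$. Let $a = \fp'(G^{S^{\Opt}},0)$ and $b = \fp'(G^{S^*},0)$. Writing out the expansions, we have $\Deltafp(G^{S^{\Opt}},\FitAdv) = a\FitAdv + \O(\FitAdv^2)$ and $\Deltafp(G^{S^*},\FitAdv) = b\FitAdv + \O(\FitAdv^2)$, so the ratio of interest equals $(a + \O(\FitAdv))/(b + \O(\FitAdv))$ after dividing numerator and denominator by $\FitAdv$.

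Next I would use the two defining optimality properties. First, since $S^*$ maximizes the derivative by~\cref{eq:optimization_weak}, we have $b \geq a$; this gives $(a + \O(\FitAdv))/(b + \O(\FitAdv)) \leq 1 + \O(\FitAdv)/b$, so the ratio is at most $1 + \O(\FitAdv)$ provided $b$ is bounded away from~$0$. Second, since $S^{\Opt}$ maximizes $\fp(G^S,\FitAdv)$ by~\cref{eq:optimization}, we have $\Deltafp(G^{S^{\Opt}},\FitAdv) \geq \Deltafp(G^{S^*},\FitAdv)$, which gives the lower bound of $1$ on the ratio. Combining the two bounds yields the desired $1 + \O(\FitAdv)$.

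The one subtlety to handle is the degenerate case $b = 0$, i.e., when no active set of size~$k$ can improve the fixation probability at first order. In that case $a = 0$ as well (by $0 \leq a \leq b$), both $\Deltafp$ terms are $\O(\FitAdv^2)$, and the ratio is trivially $1 + \O(\FitAdv)$ because $S^*$ and $S^{\Opt}$ agree to first order; alternatively one simply assumes the non-trivial regime $b > 0$, in which case $b$ depends only on $G$ and $k$ (not on $\FitAdv$) so the asymptotic notation absorbs the factor $1/b$. I expect this handling of the $b=0$ case, together with being careful that the $\O(\FitAdv)$ hidden constants may depend on $G$, $k$, and $b$ but not on $\FitAdv$ in the stated limit, to be the only non-routine part; the rest is an immediate consequence of Taylor expansion and the two optimality definitions.
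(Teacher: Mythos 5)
Your argument is essentially the intended one: the paper presents this as a ``simple'' consequence of the Taylor expansion together with the two optimality conditions, which is exactly your main chain (ratio $\ge 1$ from the optimality of $S^{\Opt}$ for \cref{eq:optimization}; ratio $\le 1+\O(\FitAdv)$ from $a\le b$ after cancelling one factor of $\FitAdv$), and that chain is correct. Two remarks. First, $S^{\Opt}$ depends on $\FitAdv$, so $a$ is not literally a single constant; this is harmless because there are only finitely many candidate sets, so the $\O(\FitAdv^2)$ error constants can be taken uniform over them and $a\le b$ holds for whichever set $S^{\Opt}$ happens to be, but it is worth saying. Second, your fallback for the degenerate case $b=0$ is not a valid argument as written: if both $\Deltafp$ terms are $\O(\FitAdv^2)$, the ratio is governed by second-order coefficients, which $S^*$ does not optimize, so ``agreeing to first order'' does not yield $1+\O(\FitAdv)$ (e.g.\ a ratio $2\FitAdv^2/\FitAdv^2=2$ is consistent with both quantities being $\O(\FitAdv^2)$). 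Fortunately the case is vacuous for $k\ge 1$: by \cref{lem:weak_selection}, $\fp'(G^S,0)=\sum_{u_i\in S}\alpha(u_i)$ with $\alpha(u_i)=\frac{1}{n}\sum_{j}p_{ij}\pi_j\psi_{ij}>0$ for every node, since strong connectivity gives some $j\ne i$ with $p_{ij}>0$ while $\pi_j>0$ and $\psi_{ij}>0$; hence $b>0$, and your main argument suffices on its own.
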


\begin{figure}[t]
\includegraphics[width=0.95\linewidth]{\mypath 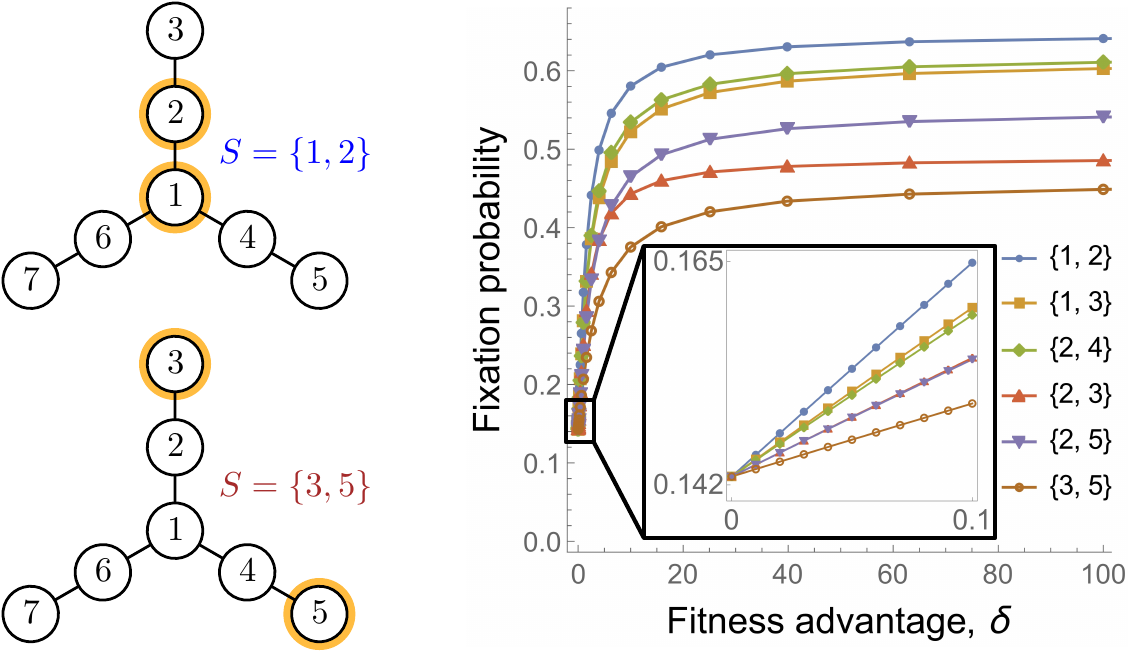}
\caption{A fixed graph~$G$ on 7 nodes and fixation probability~$\fp(G^S,\FitAdv)$ for all six non-equivalent subsets~$S$ of~$k=2$ nodes; $\fp(G^S,\FitAdv)$ is roughly linear in~$\FitAdv$ when~$\FitAdv\in[0,0.1]$.}
\label{fig:tristar}
\end{figure}

\Paragraph{Monotonicity.} 
Enlarging the set~$S$ or the advantage~$\FitAdv$ increases the fixation probability. The proof generalizes the standard Moran process monotonicity~\cite{Diaz2016}.

\begin{restatable}[Monotonicity]{lemma}{lemmonotonicity}\label{lem:monotonicity}
Consider a graph $G=(V,E)$, two subsets $S\subseteq S'\subseteq V$ and two real numbers $0\le \FitAdv\le \FitAdv'$. Then
$ \fp(G^S,\FitAdv)\le \fp(G^{S'},\FitAdv')$.
\end{restatable}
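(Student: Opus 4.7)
The plan is to prove the inequality by a coupling argument that generalises the monotonicity proof for the standard Moran process of Diaz et al.\ (2016) to the positional setting. Since
\[
\fp(G^S,\FitAdv)=\frac{1}{n}\sum_{u\in V}\fp(G^S,\FitAdv,\{u\}),
\]
it suffices to prove the pointwise inequality $\fp(G^S,\FitAdv,\{u\})\le \fp(G^{S'},\FitAdv',\{u\})$ for every $u\in V$. The elementary ingredient is the following pointwise fitness comparison: for any $X\subseteq Y\subseteq V$ and every $v\in V$, the fitness of $v$ under the first system (mutant set $X$, active set $S$, advantage $\FitAdv$) is at most its fitness under the second (mutant set $Y$, active set $S'$, advantage $\FitAdv'$). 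Indeed, $v$ attains the boosted value $1+\FitAdv$ in the first system only if $v\in X\cap S$, which by $X\subseteq Y$ and $S\subseteq S'$ forces $v\in Y\cap S'$, so $v$ attains $1+\FitAdv'\ge 1+\FitAdv$ in the second.

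With this in hand, I would construct a Markovian coupling $(X_t,Y_t)_{t\ge 0}$ on $2^V\times 2^V$ with $X_0=Y_0=\{u\}$ whose marginals evolve as the two positional Moran processes and that preserves the subset invariant $X_t\subseteq Y_t$ at every step. It is convenient to build the coupling in two stages joined by transitivity: first coupling $(S,\FitAdv)$ with $(S,\FitAdv')$ (monotonicity in the advantage, essentially the Diaz et al.\ construction applied with the positional fitness), then coupling $(S,\FitAdv')$ with $(S',\FitAdv')$ (monotonicity in the active set). In each stage, given $X_t\subseteq Y_t$, the one-step joint sample is arranged so that (i)~whenever a mutant of $X_t$ reproduces in process~1, a mutant of $Y_t$ reproduces in process~2 and targets the same replaced neighbour, and (ii)~whenever a resident outside $Y_t$ reproduces in process~2, the same node reproduces in process~1. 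Under the resulting coupling every trajectory with $X_\tau=V$ satisfies $Y_\tau=V$ at the same time $\tau$, hence $\{\exists\tau\colon X_\tau=V\}\subseteq\{\exists\tau\colon Y_\tau=V\}$, and taking probabilities gives the lemma.

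The main obstacle is the explicit design of the one-step joint law. Because $\TotalFitness_S(X_t)$ and $\TotalFitness_{S'}(Y_t)$ differ in general, the two birth distributions are not scalar multiples of each other, so an ``identical node reproduces'' coupling is not directly available. The pointwise fitness inequality above must be leveraged by decomposing the birth mass along the partition $\{X_t,\ Y_t\setminus X_t,\ V\setminus Y_t\}$ of $V$ and verifying, via a small acceptance--rejection construction, that enough mass can be routed through the inclusion-preserving joint events to realise both marginals exactly; this bookkeeping is routine but slightly delicate, which is why it is worth isolating the ``$\FitAdv$-only'' and ``$S$-only'' stages rather than attempting a single direct coupling.
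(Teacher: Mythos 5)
Your proposal is correct and takes essentially the same route as the paper: the key step in both is the pointwise fitness comparison $\Fitness_{\X,S}(v)\le \Fitness_{\X',S'}(v)$ for $\X\subseteq\X'$, after which the subset-preserving coupling of Diaz et al.\ (their Lemma~5) applies verbatim. The two-stage decomposition you propose (first varying $\FitAdv$, then $S$) is harmless but unnecessary, since that single fitness inequality already covers both parameter changes simultaneously and lets one invoke the existing coupling in one step.
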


\section{Negative Results}\label{sec:hardness}
We start by presenting hardness results for $\NodeActivationMoran$ and $\NodeActivationMoranStrong$.

\subsection{Hardness of $\NodeActivationMoran$ and $\NodeActivationMoranStrong$}\label{subsec:np_hardness}

We first show $\NP$-hardness for $\NodeActivationMoran$ and $\NodeActivationMoranStrong$.
This hardness persists even given an oracle that computes the fixation probability $\fp(G^S,\FitAdv)$ (or $\fp^{\infty}(G^S)$) for a graph~$G$ and a set of active nodes~$S$.
Given a set of mutants $A$, let
\begin{align}
\fp^{\infty}(G^S,A)=\lim_{\FitAdv\to\infty}\fp(G^S,\FitAdv, A)
\end{align}
be the fixation probability of a set of mutants $A$ under strong selection, i.e., in the limit of large fitness advantage. The following lemma states that for undirected graphs, a single active mutant with infinite fitness guarantees fixation.

\begin{restatable}{lemma}{lemfixprobstrong}\label{lem:fixprobstrong}
Let $G$ be an undirected graph, $S$ a set of active nodes, and $A$ a set of mutants. If  $A\cap S\neq \emptyset$ then $\fp^{\infty}(G^S, A)=1$.
\end{restatable}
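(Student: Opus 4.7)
The plan is to show that, as $\FitAdv\to\infty$, the probability that any active mutant is ever destroyed during the process tends to~$0$. Once this is established, on the complementary event we have $|\X_t\cap S|\geq 1$ for all~$t$, so $\X_t\neq\emptyset$ throughout, and since $G$ is strongly connected and the chain is absorbing, the only attainable absorbing state is~$V$, i.e.\ fixation.

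First I would establish a per-step destruction bound. An active mutant at a node $u\in\X_t\cap S$ is destroyed in the next step only if some resident $u'$ with $(u',u)\in E$ is selected for reproduction and chooses $u$ as the death target. Because $\TotalFitness_S(\X_t)\geq 1+\FitAdv$ whenever $\X_t\cap S\neq\emptyset$, the probability that any resident is selected for reproduction is at most $n/(1+\FitAdv)$, and so the per-step probability of destroying any active mutant is $\O(1/\FitAdv)$.

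Next, I would pass to the embedded Markov chain on state-changing transitions and distinguish two regimes. In the \emph{exposed} regime some active mutant has a resident neighbor: here active-mutant-driven growth contributes rate $\Theta(\FitAdv)$ to the transition, while any shrinkage contributes rate $\Theta(1)$, so conditional on a state change the destruction of an active mutant occurs with probability $\O(1/\FitAdv)$ and growth dominates. In the \emph{shielded} regime no active mutant has a resident neighbor: then residents can only attack non-active mutants, so active mutants cannot be destroyed at all, and the boundary between non-active mutants and residents evolves as a neutral Moran-like process (both sides have fitness~$1$). A random-walk argument then bounds the expected total number of state changes by a polynomial in~$n$ independent of~$\FitAdv$: the exposed regime contributes $\O(n)$ state changes since growth dominates and $|\X|$ can only reach~$n$, while each visit to the shielded regime resolves in expected polynomially-many state changes by the analysis of the neutral boundary walk on a subgraph of size $\leq n$.

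Combining these bounds yields an expected total of $\O(\mathrm{poly}(n)/\FitAdv)\to 0$ active-mutant destructions, which completes the argument. The main obstacle is controlling the shielded regime: one must show that the neutral boundary dynamics either expose an active mutant or reach fixation within polynomially many state changes, uniformly in~$\FitAdv$. What makes this tractable is that during the shielded regime the active mutants are genuinely immune to destruction, so a loose polynomial-time estimate suffices and no fine mixing-time analysis is required.
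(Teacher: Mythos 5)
Your high-level strategy is sound and genuinely different from the paper's: the paper proves the lemma by coupling the process with an explicit auxiliary Markov chain in which, from the state ``$u$ is an active mutant'', the process either loses $u$ with probability $\O(1/\FitAdv)$ or floods all of $u$'s neighbors and then climbs to fixation with a per-attempt success probability that is exponentially small in $n$ but \emph{independent of $\FitAdv$}; taking $\FitAdv\to\infty$ then wins without any polynomial control on the number of retries. You instead try to bound the expected total number of active-mutant destructions by (per-state-change destruction probability) $\times$ (expected number of state changes), which requires the latter to be $\mathrm{poly}(n)$ uniformly in $\FitAdv$. That target claim is true and your reduction to it is correct (a destruction is itself a state change, so self-loop steps are harmless, and $\Pr[\text{fixation}]\ge\Pr[\text{no active mutant ever destroyed}]$ since absorption occurs almost surely). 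Note also that your exposed-regime rate comparison is exactly where undirectedness enters: in a directed graph an active mutant can have a resident in-neighbor but no resident out-neighbor, so growth need not dominate conditional on a state change (cf.\ the paper's directed-triangle counterexample).

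The gap is in your accounting of the number of state changes. The claim that ``the exposed regime contributes $\O(n)$ state changes since growth dominates and $|\X|$ can only reach $n$'' is false: $|\X|$ can shrink repeatedly during shielded phases (residents eroding the non-active mutants), and every such loss must be regrown by exposed-regime growths. So the number of exposed growths is bounded only by $n$ plus the total number of shrinks, the shielded shrinks are bounded by (number of shielded visits) $\times$ (per-visit cost), and the number of shielded visits is in turn bounded by the number of exposed state changes --- the estimate is circular and does not close. (Already on a star with $S=A=\{\ell_1\}$ a leaf, the process bounces between exposed and shielded $\Theta(n^2)$ times, so even the conclusion $\O(n)$ is wrong, not merely its justification.) The fix is to drop the regime decomposition for this step and bound the number of state changes globally: the potential $\phi(\X)=\sum_{u\in\X}1/\deg(u)$ used in the paper's proof of \cref{thm:approx_fix_prob} is a bounded submartingale for the positional process, every state change moves it by at least $1/n$ in absolute value, and conditioning on a state change preserves the submartingale property; optional stopping applied to $\phi^2-N_t/n^2$ (with $N_t$ the number of state changes up to time $t$) gives $\E[N]\le n^4$ uniformly in $\FitAdv$ and in the regime structure. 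With that substitution your argument goes through and yields the lemma.
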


The intuition behind \cref{lem:fixprobstrong} is as follows:~consider a node $u\in A\cap S$.
Then, as~$\FitAdv\to\infty$, node~$u$ is chosen for reproduction at a much higher rate than any resident node.
Hence, in a typical evolutionary trajectory, the neighbors of node $u$ are occupied by mutants for most of the time and thus the individual at node~$u$ is effectively guarded from any threat of becoming a resident.
In contrast, the mutants are reasonably likely to spread to any other part of the graph.
In combination, this allows us to argue that mutants fixate with high probability as~$\FitAdv\to\infty$.
Notably, this argument relies on~$G$ being undirected.
As a simple counterexample, consider~$G$ being a directed triangle~$a\to b\to c\to a$ and~$S = A = \{a\}$. 
Then, for the mutants to spread to~$c$, a mutant on~$b$ must reproduce. 
Since~$b\not \in S$, a mutant on~$b$ is as likely to reproduce in each round as a resident on~$c$. 
If the latter happens prior to the former, node~$a$ becomes resident, and, with no active mutants left, we have a constant (non-zero) probability of extinction.

Due to \cref{lem:fixprobstrong}, if~$G$ is an undirected regular graph with a vertex cover of size~$k$, under strong selection, we achieve the maximum fixation probability by activating a set~$S$ of~$k$ nodes that forms a vertex cover of~$G$. The key insight is that, if the initial mutant lands on some node~$u\not \in S$, then the mutant fixates if and  only if it manages to reproduce once before it is replaced, as all neighbors of~$u$ are in~$S$ and thus \cref{lem:fixprobstrong} applies. This is captured in the following lemma.

\begin{restatable}{lemma}{lemvertexcover}\label{lem:vertex_cover}
For any undirected regular graph $G$ and $S\subseteq V$, $\fp^{\infty}(G^{S})\geq\frac{n+|S|}{2n}$ iff $S$ is a vertex cover of $G$ (and if so, the equality holds).
\end{restatable}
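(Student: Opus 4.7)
The plan is to compute $\fp^{\infty}(G^S)$ by conditioning on the initial mutant position. Writing $q(u) := \fp^{\infty}(G^S, \{u\})$, \cref{lem:fixprobstrong} immediately gives $q(u) = 1$ for $u \in S$, so the lemma reduces to showing $q(u) \le 1/2$ for every $u \notin S$, with equality holding for all such $u$ simultaneously iff $S$ is a vertex cover.

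Fix $u \notin S$ and start from $\X_0 = \{u\}$. The key structural observation is that as long as $\X_t \cap S = \emptyset$, every agent has fitness~$1$, so the dynamics coincides with the standard neutral Moran process (independently of $\delta$); once $\X_t \cap S \neq \emptyset$, \cref{lem:fixprobstrong} forces fixation with probability tending to~$1$ as $\delta \to \infty$. Letting $\sigma := \inf\{t : \X_t \cap S \neq \emptyset \text{ or } \X_t = \emptyset\}$, this yields $q(u) = \Pr[\X_\sigma \cap S \neq \emptyset \mid \X_0 = \{u\}]$. I then exploit the standard martingale $|\X_t|/n$ of neutral Moran on a regular graph: optional stopping at $\sigma$ gives $\E[|\X_\sigma|] = 1$, hence
\[
q(u) = \frac{1}{\E[|\X_\sigma| \mid \X_\sigma \cap S \neq \emptyset]}.
\]

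The decisive step is to show $|\X_\sigma| \geq 2$ whenever $\X_\sigma \cap S \neq \emptyset$: since $|\X|$ changes by at most one per step and $\X_{\sigma-1} \cap S = \emptyset$, the transition into time $\sigma$ must \emph{add} a mutant on some node of $S$, giving $|\X_\sigma| = |\X_{\sigma-1}| + 1 \geq 2$. This immediately yields $q(u) \leq 1/2$ and $\fp^{\infty}(G^S) \leq (n+|S|)/(2n)$. For the iff: if $S$ is a vertex cover, every neighbor of $u \notin S$ lies in $S$, so every change event from the state $\{u\}$ either spreads $u$ into $S$ (yielding $|\X_\sigma| = 2$) or replaces $u$ (yielding $|\X_\sigma| = 0$); hence $q(u) = 1/2$ and equality holds. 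If $S$ is not a vertex cover, I take an edge $\{a,b\}$ with $a,b \notin S$ and argue that, starting from $\{a\}$, there is a positive-probability sequence of birth-death events that first spreads $a$ to $b$ and then propagates mutants along a shortest path from $b$ to $S$ through $V \setminus S$ (which exists because $G$ is connected and $S \neq \emptyset$), producing a configuration of size $\geq 3$ at the moment $S$ is first hit; then $\E[|\X_\sigma| \mid \X_\sigma \cap S \neq \emptyset] > 2$ and $q(a) < 1/2$, giving the strict inequality. The main obstacle is this final positive-probability argument: one must verify that the chain of path-extension steps can be executed while keeping $|\X_t| \geq 2$ all the way up to the hitting of $S$, which follows by scheduling the corresponding single-node birth events one at a time in the neutral dynamics.
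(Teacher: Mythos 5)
Your proof is correct, and its quantitative core is genuinely different from the paper's. Both arguments share the same skeleton: condition on the initial node, use \cref{lem:fixprobstrong} to get $q(u)=1$ on $S$, and reduce everything to the neutral process run until it first hits $S$ or dies out (the paper makes this reduction explicit only later, in the proof of \cref{lem:strong_selection_submodular}, but uses it implicitly here too). Where you diverge is in how the value $1/2$ and its strictness are obtained. The paper computes, for $u\notin S$ with all neighbors in $S$, a direct two-event race: $u$ reproduces at rate $1/n$ and is replaced at rate $d\cdot\frac{1}{n}\cdot\frac{1}{d}=1/n$, giving $p=1/2$; for $u$ with an uncovered neighbor it then asserts $q(u)<1/2$ because reproducing first is necessary but not sufficient for reaching $S$. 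You instead invoke the martingale $|\X_t|$ of the neutral process on a regular graph and optional stopping to get the identity $q(u)=1/\E\left[|\X_\sigma|\;\middle|\;\X_\sigma\cap S\neq\emptyset\right]$, from which the bound $q(u)\le 1/2$, the equality case ($|\X_\sigma|=2$ a.s.\ when all neighbors of $u$ lie in $S$), and the strict inequality (a positive-probability trajectory reaching $S$ with three or more mutants) all fall out of one formula. This buys you a cleaner and more self-contained justification of the strictness, which the paper leaves somewhat implicit, at the cost of needing the (easy) verification that $|\X_t|$ is a martingale precisely because $G$ is regular and undirected --- the same place the paper uses regularity. Two small points to tidy up: your final positive-probability argument only needs the prescribed trajectory to consist solely of spread events, so $|\X_t|$ is automatically nondecreasing and the ``keep $|\X_t|\ge 2$'' worry is vacuous; and, like the paper, your identification $q(u)=\Pr[\X_\sigma\cap S\neq\emptyset]$ tacitly assumes $S\neq\emptyset$, which is the only degenerate case excluded by both arguments.
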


\begin{proof} 
First, note that due to \cref{lem:fixprobstrong}, the fixation probability equals the probability that eventually a mutant is placed on an active node. Due to the uniform placement of the initial mutant, the probability that it lands on an active node is $\frac{|S|}{n}$. 
Let $A$ be the set of nodes in~$V\setminus S$ that have at least one neighbor not in~$S$.
Thus we can write:
\begin{align}
\fp^{\infty}(G^{S}) &= {\textstyle \frac{|S|}{n} \!+\! \frac{n-|S|-|A|}{n}p} \!+\! \sum_{u\in A}{\textstyle \frac{1}{n}} \fp^{\infty}(G^S\!, \{u\})
\label{eq:vc_upper_bound}
\end{align}
where~$p$ is the probability that an initial mutant occupying a non-active node~$u$ whose neighbors are all in~$S$ spreads to any of its neighbors, i.e., reproduces before any of its neighbors replace it; $u$ reproduces with probability~$p_1 = 1/n$, while its neighbors replace it with probability $p_2 = \sum_{(v,u)\in E} 1/n\cdot 1/d = d \cdot 1/n\cdot 1/d = 1/n$, where~$d$ is the degree of~$G$, hence~$p = p_1/(p_1+p_2) = 1/2$. 
If~$S$ is a vertex cover of~$G$, then $A = \emptyset$, hence \cref{eq:vc_upper_bound} yields~$\fp^{\infty}(G^{S}) = \frac{n+|S|}{2n}$.
If~$S$ is not a vertex cover of~$G$, then~$|A|\geq 1$; for any node~$u \in A$, since at least one of its neighbors is not in~$S$, $\fp^{\infty}(G^S, \{u\})$ is strictly smaller than the probability that the mutant on~$u$ reproduces before it gets replaced, which, as we argued, is~$\frac{1}{2}$, hence~\cref{eq:vc_upper_bound} yields~$\fp^{\infty}(G^{S})<\frac{n+|S|}{2n}$. 
\end{proof}

The hardness for $\NodeActivationMoranStrong$  is a direct consequence of \cref{lem:vertex_cover} and the fact that vertex cover is $\NPH$ even on regular graphs (see, e.g.,~\cite{Feige2003}). Moreover, as $\fp(G^S, \FitAdv)$ is a continuous function on~$\FitAdv$, we also obtain hardness for $\NodeActivationMoran$ (i.e., under finite $\FitAdv$).
We thus have the following theorem.

\begin{restatable}{theorem}{thmnphard}\label{thm:np_hard}
$\NodeActivationMoranStrong(G,k)$ and $\NodeActivationMoran(G,\FitAdv,k)$ are $\NPH$, even on undirected regular graphs.
\end{restatable}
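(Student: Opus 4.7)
The plan is to reduce the $\NPH$ problem of Vertex Cover on regular graphs to both $\NodeActivationMoranStrong$ and $\NodeActivationMoran$, using \cref{lem:vertex_cover} as the key structural bridge. The lemma essentially hardcodes a vertex cover predicate into the value of the fixation probability, so the $\NP$-hardness falls out with minimal work on top.

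For $\NodeActivationMoranStrong$: given a Vertex Cover instance $(G,k)$ with $G$ an undirected regular graph on $n$ nodes, I would use the same $G$ and the same budget $k$ in the target problem, with decision threshold $p = (n+k)/(2n)$. By \cref{lem:vertex_cover}, a set $S$ with $|S|=k$ satisfies $\fp^{\infty}(G^S)\ge p$ if and only if $S$ is a vertex cover of $G$. Hence the decision version of $\NodeActivationMoranStrong$ on this instance is equivalent to deciding Vertex Cover on $(G,k)$, which is $\NPH$ on regular graphs~\cite{Feige2003}.

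For $\NodeActivationMoran$ with finite $\FitAdv$, I would lift this separation by continuity. Define the minimum gap
\[
\alpha \;=\; \min\bigl\{\,\tfrac{n+k}{2n}-\fp^{\infty}(G^S)\;:\;|S|=k,\ S \text{ is not a vertex cover of } G\,\bigr\},
\]
which is strictly positive by \cref{lem:vertex_cover} (and, by convention, $+\infty$ if every size-$k$ set is a VC, a trivial case). By \cref{lem:monotonicity}, $\fp(G^S,\FitAdv)\le \fp^{\infty}(G^S)\le \tfrac{n+k}{2n}-\alpha$ for every non-VC $S$ and every finite $\FitAdv$. On the other hand, for every VC $S$ the continuous function $\FitAdv\mapsto \fp(G^S,\FitAdv)$ converges to $\tfrac{n+k}{2n}$ as $\FitAdv\to\infty$, so there is a finite $\FitAdv_0$ past which $\fp(G^S,\FitAdv)>\tfrac{n+k}{2n}-\alpha/2$ for every such $S$ simultaneously (there are only finitely many). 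Reducing to the instance $(G,\FitAdv_0,k)$ with threshold $p'=\tfrac{n+k}{2n}-\alpha/2$ then completes the reduction.

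The main obstacle is that $\FitAdv_0$ is part of the input and must have polynomial bit-size for the reduction to be polynomial-time; this requires a quantitative version of the last paragraph. Since $\fp(G^S,\FitAdv)$ is a rational function of $\FitAdv$ with integer-coefficient numerator and denominator of degree polynomial in $n$, standard root-separation estimates give a $2^{-\mathrm{poly}(n)}$ lower bound on $\alpha$ and, via the same rational-function representation, a $2^{\mathrm{poly}(n)}$ upper bound on the $\FitAdv_0$ at which every VC crosses the shifted threshold. Both quantities therefore have polynomial bit-size and can be produced from $(G,k)$ in polynomial time, so the reduction is valid.
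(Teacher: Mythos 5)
Your reduction is the same as the paper's: the strong-selection case goes through \cref{lem:vertex_cover} with threshold $\frac{n+k}{2n}$, and the finite-$\FitAdv$ case is handled by establishing a gap below $\frac{n+k}{2n}$ for all non-vertex-cover sets at $\FitAdv=\infty$, pushing it down to finite $\FitAdv$ via monotonicity (\cref{lem:monotonicity}), and letting vertex covers cross a shifted threshold by convergence. The one step where your argument breaks is the quantitative claim you yourself flag as the main obstacle: $\fp(G^S,\FitAdv)$ is \emph{not} a ratio of integer polynomials of degree polynomial in $n$. The underlying Markov chain has $2^n$ states, so Cramer's rule only yields numerator and denominator of degree, and coefficient bit-size, exponential in $n$. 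Root-separation estimates applied to such polynomials lower-bound $\alpha$ only by $2^{-2^{\mathrm{poly}(n)}}$ and upper-bound the required $\FitAdv_0$ only by $2^{2^{\mathrm{poly}(n)}}$; both then need exponentially many bits, and the reduction is no longer polynomial-time. The paper avoids this for the gap by bounding it combinatorially rather than algebraically: a slightly sharper run of the argument in \cref{lem:vertex_cover} shows that any non-vertex-cover $S$ has $\fp^{\infty}(G^S) < \frac{n+|S|}{2n} - \frac{1}{2n^5}$, so $\alpha \ge \frac{1}{2n^5}$ with no root-separation machinery. (The paper is itself terse about the magnitude of $\FitAdv^*$ — it only asserts existence — so your instinct to quantify it is sound; the clean route is to bound the convergence rate of $\fp(G^S,\FitAdv)$ to its limit directly from the coupling in the proof of \cref{lem:fixprobstrong}, which gives $\FitAdv^*=2^{\mathrm{poly}(n)}$, i.e., polynomially many bits, rather than going through the algebraic representation of $\fp$.)
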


Finally, we note that the hardness persists even given an oracle that computes the fixation probability $\fp(G^S,\FitAdv)$ (or $\fp^{\infty}(G^S)$) given a graph~$G$ and a set of active nodes~$S$.


\subsection{Non-Submodularity for $\NodeActivationMoran$}\label{subsec:non_submodularity}

One standard way to circumvent $\NP$-hardness is to show that the optimization function is submodular, which implies that a greedy algorithm offers constant-factor approximation guarantees~\cite{Nemhauser1978,Krause2014}. Although in the next section we prove that $\fp^\infty(G^S)$ is indeed submodular, unfortunately the same does not hold for $\fp(G^S, \FitAdv)$, even for very simple graphs.

\begin{restatable}{theorem}{thmnonsubmodular}\label{thm:non_submodular}
The function $\fp(G^S,\FitAdv)$ is not submodular.
\end{restatable}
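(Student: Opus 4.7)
The plan is to refute submodularity by exhibiting an explicit counterexample: a graph $G$, a fitness advantage $\FitAdv>0$, two sets $S\subsetneq T\subseteq V$, and a node $u\not\in T$ such that
\[
\fp(G^{T\cup\{u\}},\FitAdv)-\fp(G^{T},\FitAdv) > \fp(G^{S\cup\{u\}},\FitAdv)-\fp(G^{S},\FitAdv),
\]
violating the diminishing-returns inequality. Since non-submodularity only needs to fail at a single point, it suffices to work with a small graph on which all relevant fixation probabilities can be computed exactly.

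First I would restrict attention to a small graph (e.g.\ a short path, a small star, or the tristar of \cref{fig:tristar}, whose fixation probabilities are already visualized). On such a graph, the positional Moran process defines a finite Markov chain with $2^n$ states, and $\fp(G^R,\FitAdv)$ can be obtained in closed form as a rational function of $\FitAdv$ by solving the linear system for the absorption probabilities with $R$ as the active set and each singleton $\{v\}$ as the initial state, then averaging. I would carry this out symbolically (via computer algebra) for every active set $R$ on the chosen graph, so that all required marginal gains are available as explicit rational functions of $\FitAdv$.

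Second, I would search over the finite collection of triples $(S,T,u)$ with $S\subsetneq T$ and $u\not\in T$, and over a grid of values of $\FitAdv$, for a violation of the submodular inequality. Intuitively, a good place to look is a configuration where $u$ alone (added to $S$) is isolated from other active nodes and therefore easily displaced by residents, while when added to $T$ it is adjacent to an already-active node; the two neighboring active nodes then reinforce one another, so adding $u$ to $T$ produces a strictly larger increase in fixation probability than adding $u$ to $S$. This kind of synergy between spatially clustered active nodes is exactly the phenomenon that breaks down only in the strong-selection limit, where \cref{lem:fixprobstrong} ensures that a single active mutant already suffices.

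The main obstacle is computational bookkeeping: the rational expressions for $\fp(G^R,\FitAdv)$ grow quickly with $n$, and a careless choice of graph may lead to expressions that are submodular for structural reasons (e.g.\ vertex-transitive graphs where many sets are equivalent). I would therefore pick an asymmetric graph so that sets of equal size are genuinely different, and either fix a concrete numerical value such as $\FitAdv=1$ or evaluate the rational functions at several values and certify the strict inequality exactly. Once a single witnessing quadruple $(G,S,T,u,\FitAdv)$ is exhibited, \cref{thm:non_submodular} follows directly from the definition of submodularity.
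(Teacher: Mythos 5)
Your overall strategy --- exhibit a small graph, compute all fixation probabilities exactly as rational functions of $\FitAdv$ by solving the $2^n$-state absorption system, and locate a violation of the submodular inequality --- is exactly the approach the paper takes. But as written your argument has a genuine gap: it is a search plan, not a proof. For a statement of the form ``$\fp(G^S,\FitAdv)$ is not submodular,'' the proof \emph{is} the witness, and you never produce one. You do not name a concrete graph, a concrete $\FitAdv$, concrete sets $S$, $T$, $u$, or the resulting numerical values, so there is nothing to verify. The paper closes this gap with $G=K_4$ and $\FitAdv=1/3$: writing $p_i$ for the fixation probability with $i$ active nodes, exact computation gives $p_0=1/4$, $p_1=38413/137740<0.2788$, $p_2=28984/94153>0.3078$, whence $p_0+p_2>2p_1$ and submodularity fails (taking $S$, $T$ to be distinct singletons, so $S\cap T=\emptyset$ and $|S\cup T|=2$).

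Your heuristic guidance for where to look is also pointed in the wrong direction. You advise choosing an \emph{asymmetric} graph on the grounds that vertex-transitive graphs ``may lead to expressions that are submodular for structural reasons,'' and you attribute the expected violation to spatial synergy between adjacent active nodes. The paper's counterexample is the complete graph $K_4$ --- maximally symmetric, with all active sets of a given size equivalent --- and the violation there is purely a superadditivity in the \emph{number} of active nodes (the function $i\mapsto p_i$ is convex at $i=1$ for small $\FitAdv$), with no spatial structure involved at all. Your search would likely still succeed if run broadly enough, but to turn the proposal into a proof you must actually commit to and verify a specific quintuple $(G,S,T,u,\FitAdv)$; until then the theorem is not established.
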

\begin{proof}
Our proof is by means of a counter-example.
Consider $G=K_{4}$, that is, $G$ is a clique on~$4$ nodes.
For $i\in\{0,1,2\}$ denote by $p_i$ the fixation probability on $K_4$ when $i$ of the 4 nodes are active and mutants have fitness advantage $\FitAdv=1/3$.
For fixed $i$, the value $p_i$ can be computed exactly
by solving a system of $2^4$ linear equations (one for each configuration of mutants and residents).
Solving the three systems using a computer algebra system, we obtain $p_0=1/4$, $p_1=38413/137740<0.2788$ and $p_2=28984/94153>0.3078$.
Hence $p_0+p_2>0.5578=2\cdot 0.2789>2p_1$, thus submodularity is violated.
\end{proof}

For $G=K_4$, the three systems can even be solved symbolically, in terms of $\FitAdv>0$. It turns out that the submodularity property is violated for $\FitAdv<\delta^\star$, where $\delta^\star\doteq 0.432$. In contrast, for $G=K_3$ the submodularity property holds for all $\FitAdv\ge 0$ (the verification is straightforward, though tedious).

\section{Positive Results}\label{sec:positive}
We now turn our attention to positive results for fixation maximization in the limits of strong and weak selection.

\subsection{Approximating the Fixation Probability}\label{subsec:approx}

We first focus on computing the fixation probability.
Although the complexity of the problem is open even in the standard Moran process,
when the underlying graph is undirected, the expected number of steps until
the standard (non-positional) process terminates 
is polynomial in $n$~\cite{Diaz2014}, which yields a FPRAS.
A straightforward extension of that approach applies to the expected number $\ft(G^S,\FitAdv)$ of steps in the positional Moran process.
\begin{restatable}{lemma}{thmapproxfixprob}\label{thm:approx_fix_prob}
Given an undirected graph $G=(V,E)$ on $n$ nodes,
a set $S\subseteq V$ and a real number $\FitAdv\geq 0$,
we have $\ft(G^S,\FitAdv)\le (1+\FitAdv)n^6$.
\end{restatable}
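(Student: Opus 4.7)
The plan is to extend the potential-function argument used for the standard Moran process in~\cite{Diaz2014} to the positional variant, via the potential
\[
\phi(\X)=\sum_{v\in \X}\frac{1}{\deg(v)},\qquad 0\le\phi(\X)\le n.
\]
The key observation is that $\phi(\X_{t+1})-\phi(\X_{t})$ is nonzero only when some node flips type, and then it equals $\pm 1/\deg(v)$ for the flipped $v$.

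First I would compute $\E[\Delta\phi_t\mid\X_t=\X]$ edge-by-edge. An edge $\{u,v\}$ with $u\in\X$ and $v\notin\X$ contributes in expectation
\[
\frac{\Fitness_{\X,S}(u)-\Fitness_{\X,S}(v)}{\TotalFitness_{S}(\X)\cdot \deg(u)\cdot \deg(v)}\;\ge\;0,
\]
because $v$ is a resident with fitness $1$ while $u$ is a mutant with fitness at least $1$. Hence $\phi(\X_t)$ is a submartingale. In parallel I would lower-bound $\E[(\Delta\phi_t)^2\mid\X_t=\X]$ on any non-absorbing $\X$: connectivity of $G$ forces at least one boundary edge $\{u,v\}$, and the single transition in which the resident $v$ reproduces and replaces $u$ already contributes
\[
\frac{1}{\TotalFitness_{S}(\X)\,\deg(v)\,\deg(u)^2}\;\ge\;\frac{1}{(1+\FitAdv)\,n^4},
\]
using $\TotalFitness_{S}(\X)\le(1+\FitAdv)n$ and $\deg\le n$.

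Next I would telescope via
\[
\E[\phi(\X_{t+1})^2]-\E[\phi(\X_t)^2] = 2\,\E\!\big[\phi(\X_t)\,\E[\Delta\phi_t\mid\X_t]\big] + \E[(\Delta\phi_t)^2],
\]
where the cross term is non-negative (submartingale and $\phi\ge 0$). Summing over all $t$, the left-hand total telescopes to at most $\phi(V)^2\le n^2$, while the right-hand total is at least $\E[T]/((1+\FitAdv)n^4)$, where $T$ denotes the absorption time. Rearranging yields $\ft(G^S,\FitAdv)=\E[T]\le(1+\FitAdv)\,n^6$.

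The only delicate point is whether the submartingale drift survives the heterogeneous fitness of the positional model. It does, for the structural reason above: at any boundary edge one endpoint is a resident with fitness exactly $1$ and the other a mutant with fitness $1$ or $1+\FitAdv$, so every per-edge contribution to the drift has the correct sign regardless of which mutants happen to occupy active nodes. The remaining bounds on $\TotalFitness_S$ and $\deg$ match those in the standard setting, so the analysis carries through with no further modification.
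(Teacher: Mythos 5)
Your proposal is correct and follows essentially the same route as the paper: the same potential $\phi(\X)=\sum_{v\in\X}1/\deg(v)$, the same per-edge computation showing $\phi$ is a submartingale because every boundary resident has fitness exactly $1$ while the adjacent mutant has fitness at least $1$, and the same $\frac{1}{(1+\FitAdv)n^4}$ lower bound on the second moment of the increment at any non-absorbing state. The only (cosmetic) difference is that you telescope $\E[\phi(\X_t)^2]$ explicitly, whereas the paper invokes an off-the-shelf additive drift theorem on the rescaled function $\phi(\phi-2B)+B^2$; both yield $\ft(G^S,\FitAdv)\le(1+\FitAdv)n^6$.
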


As a consequence, by simulating the process multiple times and reporting the proportion of runs that terminated with mutant fixation, we obtain a fully polynomial randomized approximation scheme (FPRAS) for the fixation probability.

\begin{corollary}\label{cor:approx_fix_prob}
Given a connected undirected graph $G=(V,E)$, a set $S\subseteq V$ and a real number $\FitAdv\geq 0$,
the function $\fp(G^S,\FitAdv)$ admits a FPRAS.
\end{corollary}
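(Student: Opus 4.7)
The plan is to use a standard Monte Carlo estimator: simulate $N$ independent trajectories of the positional Moran process on $G^S$ starting from a uniformly random single mutant, truncate each at $T$ steps (declaring any truncated run a non-fixation), and return the empirical fraction $\hat p$ of runs that reached full mutant fixation. To meet the FPRAS requirement I need three things in combination: each step of the simulation is polynomial-time, the truncation at horizon $T$ introduces only small bias with high probability, and $N$ can be chosen polynomial in $n$, $1/\varepsilon$, and $\log(1/\delta)$.

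Two prerequisites already in the excerpt do almost all the work. First, \cref{thm:approx_fix_prob} provides the bound $M := (1+\FitAdv)n^6$ on the expected absorption time; since the argument applies equally from any configuration, it also bounds the expected remaining time conditional on whatever state the chain currently occupies. Second, \cref{lem:monotonicity} applied to $\emptyset \subseteq S$ and $0 \le \FitAdv$ gives $\fp(G^S,\FitAdv) \ge \fp(G^\emptyset,0) = 1/n$, so the quantity being estimated is at least inverse-polynomial. A multiplicative Chernoff bound on the $\operatorname{Bernoulli}(\fp)$ indicators then shows that $N = \Theta(n\log(1/\delta)/\varepsilon^2)$ samples suffice for $|\hat p - \fp(G^S,\FitAdv)| \le \varepsilon\, \fp(G^S,\FitAdv)$ with probability at least $1-\delta/2$.

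For the truncation, I iterate Markov's inequality using the state-uniform version of the $M$ bound: if a run has not absorbed after step $2iM$, then conditional on its current state the expected remaining time is still at most $M$, so Markov gives probability at most $1/2$ of surviving another $2M$ steps, hence $\Pr[\ft > 2jM] \le 2^{-j}$. Choosing $T = 2M\lceil \log_2(2N/\delta) \rceil$ and union-bounding over the $N$ runs keeps the total probability of any truncation below $\delta/2$. A final union bound combines the two error budgets, and since each simulation step is executable in $\operatorname{poly}(n)$ time, the overall running time is $\operatorname{poly}(n, 1/\varepsilon, \log(1/\delta), \FitAdv)$.

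The only step that is not entirely routine is precisely this $\log(1/\delta)$ scaling of the truncation horizon: a single application of Markov would force a $1/\delta$ factor into $T$ and thereby violate the FPRAS definition, and it is the iterated-Markov argument---which in turn relies on the state-uniformity implicit in the proof of \cref{thm:approx_fix_prob}---that recovers the required exponential tail. Everything else reduces to standard concentration.
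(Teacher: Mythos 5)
Your proposal is correct and follows exactly the route the paper intends: the paper's ``proof'' of this corollary is the one-sentence remark that Monte-Carlo simulation plus the polynomial bound on $\ft(G^S,\FitAdv)$ from the preceding lemma yields a FPRAS, and your write-up simply supplies the standard details (the $1/n$ lower bound on the fixation probability via neutrality and monotonicity, the Chernoff bound, and the iterated-Markov truncation) that the paper delegates to the cited literature. No discrepancy to report.
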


\subsection{Fixation Maximization under Strong Selection}\label{subsec:positive_strong}

Here we show that the function $\fp^\infty(G^S)$ is submodular.
As a consequence, we obtain a constant-factor approximation for $\fp^\infty(G^S)$ in polynomial time.

\begin{restatable}[Submodularity]{lemma}{lemstrongselectionsubmodular}
\label{lem:strong_selection_submodular}
For any undirected graph $G$, the function $\fp^\infty(G^S)$ is submodular. 
\end{restatable}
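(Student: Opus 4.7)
The plan is to reduce $\fp^\infty(G^S)$ to a nonnegative linear combination of coverage-type indicators of the form ``does a random $S$-independent set $R$ intersect $S$?'', for which submodularity is classical.

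Fix a starting node $u$ and let $H_u$ denote the event that the mutant trajectory $(\X_t)_{t\ge 0}$ with $\X_0=\{u\}$ ever intersects $S$. The key observation is that while $\X_t\cap S=\emptyset$ every individual has fitness exactly $1$, so the positional Moran process evolves exactly as the neutral ($\FitAdv=0$) process, regardless of $\FitAdv$. Hence $\Pr[H_u]$ is independent of $\FitAdv$, and on $H_u$ the state at the first hitting time contains a mutant on $S$. By \cref{lem:fixprobstrong} the conditional probability of eventual fixation given this state converges to $1$ as $\FitAdv\to\infty$; conversely, on $H_u^c$ fixation is impossible, since $\X_t=V$ would force $V\cap S=S\neq\emptyset$. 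Taking the limit $\FitAdv\to\infty$ yields
\[
\fp^\infty(G^S,\{u\}) \;=\; \Pr[H_u] \;=\; \Pr[\, R_u\cap S\neq\emptyset\,],
\]
where $R_u := \bigcup_{t\ge 0}\X_t$ is the random set of all nodes ever occupied by a mutant in the \emph{neutral} Moran process started from $\{u\}$ and run to absorption (which occurs almost surely because $G$ is connected). Crucially, the law of $R_u$ is $S$-independent.

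For any fixed $R\subseteq V$, the set function $g_R(S):=\mathbf{1}[R\cap S\neq\emptyset]$ is a textbook monotone submodular coverage function: adding $v\notin R$ contributes nothing, while adding $v\in R$ raises $g_R$ from $0$ to $1$ iff $S$ did not already meet $R$, a condition that can only weaken as $S$ grows. Submodularity is preserved under expectations and nonnegative linear combinations, so $\Pr[R_u\cap S\neq\emptyset]=\mathbb{E}[g_{R_u}(S)]$ is submodular in $S$ for each $u$, and hence so is the average $\fp^\infty(G^S)=\tfrac{1}{n}\sum_u\fp^\infty(G^S,\{u\})$.

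The main obstacle I expect is making the $\FitAdv\to\infty$ decoupling fully rigorous: one must separate the $\FitAdv$-independent hitting probability $\Pr[H_u]$ from the conditional fixation probability and apply \cref{lem:fixprobstrong} state-by-state, using bounded convergence to push the limit through the conditioning on the first-hit configuration. Once the representation $\fp^\infty(G^S,\{u\})=\Pr[R_u\cap S\neq\emptyset]$ with $S$-independent $R_u$ is in hand, the remaining submodularity-preservation steps are standard.
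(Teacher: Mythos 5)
Your proposal is correct and follows essentially the same route as the paper's proof: both reduce $\fp^\infty(G^S)$ to the probability that the \emph{neutral} trajectory ever intersects $S$ (via \cref{lem:fixprobstrong}), and then observe that this is an expectation of the coverage indicator $\mathbf{1}[R\cap S\neq\emptyset]$ over an $S$-independent random set $R$ of ever-occupied nodes. The paper merely verifies the coverage inequality by an explicit three-case analysis on a fixed trajectory where you cite it as classical, so the two arguments are the same in substance.
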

\begin{proof}
Consider any set $S$ of active nodes.
Then the positional Moran process $\{\X_t\}$ with mutant advantage $\FitAdv\to\infty$ is equivalent to the following process:
\begin{enumerate}
\item While $\X_t\cap S=\emptyset$, perform an update step $\X_t\to\X_{t+1}$ as if $\FitAdv=0$.
\item If $\X_t\cap S\ne \emptyset$, terminate and report fixation.
\end{enumerate}
Indeed, as long as no active node hosts a mutant, all individuals have the same (unit) fitness and
the process is indistinguishable from the Moran process with $\FitAdv=0$.
On the other hand, once any one active node receives a mutant, fixation happens with high probability by~\cref{lem:fixprobstrong}.
All in all, the fixation probability $p(S)=\fp^{\infty}(G^S)$ in the limit $\FitAdv\to\infty$
can be computed by simulating the neutral process ($\FitAdv=0$) until
either $\X_t=\emptyset$ (extinction) or $\X_t\cap S\ne\emptyset$ (fixation).

To prove the submodularity of $p(S)$, it suffices to show that for any two sets $S,T\subseteq V$ we have:
\begin{align}
p(S)+p(T)\ge p(S\cup T) + p(S\cap T).
\end{align}
Consider any fixed trajectory $\Traj = (X_0,X_1,\dots)$.
We say that a subset $U\subseteq V$ is \emph{good} with respect to $\Traj$, or simply \emph{good}, if there exists a time point~$\tau\ge 0$ such that $\X_{\tau}\cap U\ne\emptyset$. It suffices to show that, for any $\Traj$, there are at least as many good sets among $S$, $T$ as they are among $S\cup T$ and $S\cap T$. To that end, we distinguish three cases based on how many of the sets $S\cup T$, $S\cap T$ are good.
\begin{enumerate}
\item Both of them: Since $S\cap T$ is good, both $S$ and $T$ are good and we get $2\ge 2$.
\item One of them: If $S\cap T$ is good we conclude as before (we get $2\ge 1$).
Otherwise $S\cup T$ is good, hence at least one of $S$, $T$ is good and we get $1\ge 1$.
\item None of them: We have $0\ge 0$. \qedhere
\end{enumerate}
\end{proof}

The submodularity lemma leads to the following approximation guarantee for the greedy algorithm~\cite{Nemhauser1978,Krause2014}.

\begin{restatable}{theorem}{thmstorngselection}\label{thm:storng_selection}
Given an undirected graph~$G$ and integer~$k$, let~$S^*$ be the solution to $\NodeActivationMoranStrong(G,k)$, and~$S'$ the set returned by a greedy maximization algorithm.
Then~$\fp^{\infty}(G^{S'})\geq \left(1-\frac{1}{e}\right)\fp^{\infty}(G^{S^*})$.
\end{restatable}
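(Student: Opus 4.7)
The plan is to reduce the theorem to the classical result of Nemhauser, Wolsey and Fisher, which states that the natural greedy algorithm for maximizing a nonnegative, monotone, submodular set function subject to a cardinality constraint $|S|\le k$ returns a set whose value is at least $(1-1/e)$ times the optimum. Given this reduction, the theorem follows immediately once the set function $p(S) := \fp^\infty(G^S)$ is verified to satisfy the three structural hypotheses.

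Nonnegativity is immediate since $p(S)\in[0,1]$. Submodularity is exactly \cref{lem:strong_selection_submodular}. For monotonicity, I would appeal to \cref{lem:monotonicity}: for any $S\subseteq S'$ and any fixed $\FitAdv\ge 0$ we have $\fp(G^S,\FitAdv)\le \fp(G^{S'},\FitAdv)$, and this inequality is preserved in the limit $\FitAdv\to\infty$ (the limit exists because $\fp(G^S,\FitAdv)$ is monotone in $\FitAdv$ and bounded by $1$, as already noted when $\fp^\infty$ was defined), giving $p(S)\le p(S')$. Thus $p$ meets every hypothesis of Nemhauser-Wolsey-Fisher, and the greedy set $S'$ achieves $p(S')\ge (1-1/e)\,p(S^*)$.

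The one subtlety I anticipate, and the main thing requiring care, is that the classical guarantee assumes an exact value oracle for $p$, whereas we do not know how to compute $\fp^\infty(G^S)$ exactly in polynomial time. As in the hardness discussion of \cref{subsec:np_hardness}, one natural option is to simply state the result in the oracle model, in which case the theorem is immediate. A more self-contained route uses the characterization extracted inside the proof of \cref{lem:strong_selection_submodular}: $p(S)$ equals the probability that the neutral process ($\FitAdv=0$) started from a uniformly random mutant eventually places a mutant on some node of $S$ rather than going extinct. The neutral process on an undirected graph has polynomial expected absorption time (the $\FitAdv=0$ instance of \cref{thm:approx_fix_prob}), so Monte Carlo simulation yields a FPRAS for $p(S)$. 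The standard robustness of the greedy analysis under $(1\pm\eps)$-approximate oracle evaluations then preserves the $(1-1/e)$ bound up to an arbitrarily small additive error, which can be absorbed into the stated guarantee by taking $\eps$ sufficiently small.
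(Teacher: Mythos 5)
Your proposal is correct and takes essentially the same route as the paper: the theorem is obtained by applying the Nemhauser--Wolsey--Fisher greedy guarantee to the nonnegative, monotone (via \cref{lem:monotonicity} passed to the limit $\FitAdv\to\infty$), submodular (\cref{lem:strong_selection_submodular}) function $\fp^{\infty}(G^S)$. The paper likewise treats the theorem as a direct consequence of the submodularity lemma and, in the surrounding discussion, handles the value-oracle issue exactly as you do, by combining the greedy algorithm with the Monte-Carlo estimator furnished by \cref{thm:approx_fix_prob}.
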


Consequently, a greedy algorithm approximates the optimal fixation probability within a factor of~$1-1/e$, provided it is equipped with an oracle that estimates the fixation probability $\fp^{\infty}(G^{S})$ given any set~$S$. 
For the class of undirected graphs, \cref{thm:approx_fix_prob} and \cref{thm:storng_selection} yield a fully polynomial randomized greedy algorithm for $\NodeActivationMoranStrong(G,k)$ with approximation guarantee $1-1/e$.

\subsection{Fixation Maximization under Weak Selection}\label{subsec:positive_weak}

Here we show that $\NodeActivationMoranWeak$ can be solved in polynomial time.
Recall that given an active set $S\subset V$, we can write $\fp(G^S,\FitAdv)=\fp(G^S,0)+\FitAdv\cdot \dfp + \bigO(\FitAdv^2)$, while $\NodeActivationMoranWeak(G,k)$ calls to compute a set~$S^\star$ which maximizes the coefficient $\dfp$ across all sets~$S$ that satisfy~$|S|\le k$. We build on the machinery developed in~\cite{Allen2017}.

In~\cref{lem:weak_selection} below, we show that $\fp'(G^S,0)=\sum_{u_i\in S} \alpha(u_i)$
for a certain function $\alpha\colon V\to\R$.
The bottleneck in computing $\alpha(\cdot)$ is
solving a system of
$\bigO(n^2)$
linear equations,
one for each pair of nodes.
This can be done in matrix-multiplication time 
$\bigO(n^{2\omega})$
for $\omega<2.373$~\cite{Alman2021}.
By virtue of the linearity of~$\fp'(G^S,0)$, the optimal set~$S^\star$ is then given by
choosing the~top-$k$ nodes~$u_i$ in terms of~$\alpha(u_i)$.
Thus we establish the following theorem.

\begin{theorem}\label{thm:weak_selection}
Given a graph $G$ and some integer~$k$, the solution to $\NodeActivationMoranWeak(G,k)$ can be computed in
$\bigO(n^{2\omega})$
time, where $\omega<2.373$ is the matrix multiplication exponent.
\end{theorem}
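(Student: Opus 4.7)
The plan is to establish that $\dfp$ is a \emph{linear} function of the indicator of $S$, namely $\dfp=\sum_{u\in S}\alpha(u)$ for some function $\alpha:V\to\R$ computable in polynomial time. Once that holds, the optimization in~\cref{eq:optimization_weak} reduces to a trivial top-$k$ selection over the values $\alpha(u)$, and the whole problem collapses to the cost of computing $\alpha$.

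First I would derive the Taylor expansion at $\FitAdv=0$. At $\FitAdv=0$ the positional Moran process coincides with the neutral voter model, for which $\fp(G^S,0)=1/n$ on every graph. The transition probabilities depend on $\FitAdv$ only through the multiplicative factor $1+\FitAdv$ attached to mutants currently sitting on nodes of $S$. Differentiating the fixation-probability formula with respect to $\FitAdv$ at $\FitAdv=0$ and grouping terms by which active node is responsible for the perturbation, I would obtain an expression in which the contribution of each $u\in S$ separates additively from the others: the first-order cross-terms between different active nodes vanish because, under the neutral dynamics, the perturbation enters only via a single birth event at a single active mutant. This yields the claimed linear form $\dfp=\sum_{u\in S}\alpha(u)$.

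To make $\alpha(u)$ concrete, I would invoke the coalescent machinery of~\cite{Allen2017}. That framework associates to $G$ a stationary distribution $\pi$ of a reverse-time random walk together with, for each ordered pair $(u,v)$, a \emph{coalescence time} $\tau_{u,v}$ that expresses how long two independent walks starting at $u$ and $v$ take to meet. These $\tau_{u,v}$ satisfy a linear system with one equation per pair, hence $\Theta(n^2)$ equations in $\Theta(n^2)$ unknowns, whose coefficients depend only on the weights of $G$. A direct adaptation of the weak-selection formula of~\cite{Allen2017} to the positional setting expresses $\alpha(u)$ as a fixed linear combination of $\{\tau_{u,v}\}_{v\in V}$ together with $\pi$ and the edge weights incident to $u$. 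Thus computing all $\alpha(u)$ reduces to solving this coalescence-time system once.

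The main obstacle is the $\bigO(n^{2\omega})$ runtime. I would argue it as follows: the coalescence-time system is a linear system in $N=\Theta(n^2)$ variables, so solving it by forming the corresponding $N\times N$ matrix and applying fast linear-system solvers based on matrix multiplication gives runtime $\bigO(N^\omega)=\bigO(n^{2\omega})$ by~\cite{Alman2021}. All remaining steps --- assembling $\alpha$ from the $\tau_{u,v}$, and then picking the top-$k$ values --- are dominated by this bound, so the total runtime is $\bigO(n^{2\omega})$ as claimed. The delicate point will be verifying that the positional perturbation reduces \emph{exactly} to the same coalescence-time machinery as the classical weak-selection setting, rather than producing a strictly larger system; once that reduction is in place, the complexity bound is immediate.
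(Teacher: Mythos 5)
Your proposal follows essentially the same route as the paper: it establishes that $\dfp=\sum_{u\in S}\alpha(u)$ for a per-node score $\alpha$ obtained from the neutral reproductive values and a pairwise system of $\Theta(n^2)$ linear equations (the paper's $\psi_{ij}$, the expected time $u_i$ is mutant while $u_j$ is not, which plays exactly the role of your coalescence times from~\cite{Allen2017}), solves that system in $\bigO(n^{2\omega})$ time via fast matrix multiplication, and finishes with top-$k$ selection. The ``delicate point'' you flag --- that the positional perturbation yields exactly this pairwise system and no larger one --- is precisely what the paper's \cref{lem:weak_selection} verifies, via the drift of the martingale $\phi(\X)=\sum_i\pi_i x_i$ and the vanishing of the cross-terms at $\FitAdv=0$.
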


In the rest of this section, we state~\cref{lem:weak_selection}
and give intuition about the function $\alpha(\cdot)$.
For the formal proof of~\cref{lem:weak_selection}, see~\cref{sec:appendix}. 

For brevity, we denote the $n$ nodes of $G$ by $u_1,\dots,u_n$
and we denote by~$p_{ij}=\Weight(u_i, u_j)$ the probability that,
if~$u_i$ is selected for reproduction, then the offspring migrates to~$u_j$.

\begin{restatable}{lemma}{thmweakselection}\label{lem:weak_selection}
Let $G=(V,E)$ be a graph with $n$ nodes $u_1,\dots,u_n$. 
 Consider a function $\alpha\colon V\to\R$ defined by $\alpha(u_i) =  \frac{1}{n}\cdot \sum_{j\in[n]} p_{ij} \cdot \pi_j\cdot \psi_{ij}$,
where $\{\pi_i\mid u_i\in V\}$ is the solution to the linear system
given by $\sum_{i\in [n]} \pi_i=1$ together with
\begin{align}
\pi_i= \left(1-\sum_{j\in[n]} p_{ji}\right)\cdot \pi_i +\sum_{j\in[n]} p_{ij}\cdot \pi_j\quad\forall i\in[n],
\label{eq:pi2}
\end{align}
and $\{\psi_{ij} \mid (u_i,u_j)\in E\}$ is the solution to the linear system
\begin{align}
\psi_{ij}=
\begin{cases}
\frac{1+\sum_{\ell\in[n]}\left(p_{\ell i}\cdot \psi_{\ell j} + p_{\ell j}\cdot \psi_{i\ell}\right)}{\sum_{\ell\in[n]}\left(p_{\ell i}+p_{\ell j}\right)} & i\neq j 
\\
0, & \text{otherwise}.
\end{cases}
\label{eq:psi}
\end{align}
Then $\fp'(G^S,0)=\sum_{u_i\in S} \alpha(u_i)$.
\end{restatable}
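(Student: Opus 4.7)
The plan is to follow the weak-selection calculus developed by~\cite{Allen2017} and adapt it to the positional variant. The starting observation is that at~$\FitAdv=0$ the process is neutral, so $\fp(G^S,0)=1/n$ for every~$S$; all of the information about~$S$ must therefore live in the first-order Taylor coefficient~$\dfp$, and the goal is to show that this coefficient decomposes additively over the active nodes.

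First, I would attach probabilistic meaning to the auxiliary quantities. Interpret~$p_{ij}$ as the transition kernel of the ``reproduction chain'' in which, at each step, the currently marked node reproduces and the mark migrates to a neighbor. Equation~(\ref{eq:pi2}) is the stationarity condition for a lazy version of this chain, so~$\pi_i$ is the \emph{reproductive value} of~$u_i$: the long-run contribution of a lineage seeded at~$u_i$ under neutral drift. Next, view~$\psi_{ij}$ as the expected coalescence time of two independent ancestral lineages starting from~$u_i$ and~$u_j$, measured on the time scale in which exactly one of the two lineages moves at each step; equation~(\ref{eq:psi}) is the standard first-step recursion for this quantity, with boundary~$\psi_{ii}=0$, and forms a linear system in~$\binom{n}{2}$ unknowns whose solution dominates the claimed $\bigO(n^{2\omega})$ runtime.

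Second, and this is the heart of the argument, I would differentiate the fixation-probability balance equation at~$\FitAdv=0$. Writing a first-step recursion for $\fp(G^S,\FitAdv,\X)$ over all configurations~$\X$ and differentiating in~$\FitAdv$ yields a linear system whose forcing term is a sum of ``selection events'': instances in which an active mutant~$u_i\in S$ would receive the fitness bonus. Because the fitness advantage is absent outside~$S$ and the remaining dynamics coincide with the neutral chain, these events are the only source of first-order contribution. Taking expectations under the neutral chain and using the reversibility relation encoded by~$\pi$, the contribution of a single selection event located at~$u_i$ collapses to
\[
\frac{1}{n}\sum_{j\in[n]} p_{ij}\cdot \pi_j\cdot \psi_{ij}\;=\;\alpha(u_i),
\]
where~$\psi_{ij}$ is exactly the coefficient extracted by the Price-equation style identity of~\cite{Allen2017} from the selection differential carried by the directed edge~$(u_i,u_j)$ under neutral dynamics. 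Summing over~$u_i\in S$ gives the claim.

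The main obstacle is the last step: justifying the interchange of differentiation with the balance recursion, converting the resulting formal derivative into a neutral-chain expectation, and then identifying that expectation with the coalescence-weighted sum involving~$\pi$ and~$\psi$. The cleanest route is to mirror the derivation of~\cite{Allen2017} (where the mutant advantage is global), noting that the only structural change is the indicator~$\mathbf{1}\{u_i\in S\}$ that masks the fitness perturbation at inactive nodes. This masking propagates cleanly through the entire chain of identities and is precisely what produces the linearity of~$\dfp$ in~$S$, hence the node-by-node decomposition $\dfp=\sum_{u_i\in S}\alpha(u_i)$.
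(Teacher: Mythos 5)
Your high-level plan is the same as the paper's (adapt the weak-selection machinery of Allen et al.\ to the positional setting, with $\pi$ as reproductive values and a pairwise quantity $\psi$ capturing where selection can act), but the proposal leaves the central step as an assertion rather than a proof. The paper's argument hinges on two concrete pieces that your sketch only gestures at. First, one introduces the potential $\phi(\X)=\sum_i \pi_i x_i$ and \emph{explicitly computes} its one-step drift
\begin{align*}
\Delta(\X,\FitAdv)=\sum_{i,j\in[n]}\frac{1+\FitAdv\lambda_i x_i}{n(1+\FitAdv\overline{x})}\,p_{ij}(x_i-x_j)\pi_j,
\end{align*}
whose derivative at $\FitAdv=0$ collapses, using \cref{eq:pi2} to kill the neutral part, to $\Delta'(\X)=\frac{1}{n}\sum_{i,j}\lambda_i\,p_{ij}\,\pi_j\,x_i(1-x_j)$. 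Second, one needs the identity $\dfp=\sum_{t\ge 0}\sum_{\X}\Pr[\X_t=\X]\cdot\Delta'(\X)$ with the expectation taken under the \emph{neutral} process (the paper cites McAvoy--Allen, Theorem~1, for this); this is exactly the ``interchange of differentiation with the balance recursion'' that you name as the main obstacle and then defer. Without these two computations the localization of the first-order effect to the terms $\frac{1}{n}p_{ij}\pi_j\psi_{ij}$, and hence the additivity over $u_i\in S$, is not established.

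There is also a smaller but genuine mismatch in your reading of $\psi_{ij}$. In the paper, $\psi_{ij}$ is the expected total occupation time of the \emph{forward} neutral process in the pair-state ``$u_i$ mutant, $u_j$ resident,'' started from a uniformly random single mutant; the numerator ``$1$'' in \cref{eq:psi} arises from $\Pr[Y_{ij}(0)]=1/n$ after clearing the factor $1/n$ from $e_{ij}=p_{ij}/n$. The coalescence-time recursion you describe has forcing term $\sum_{\ell}(p_{\ell i}+p_{\ell j})$ rather than $1$ once put over the same denominator, so the two quantities agree only up to a node-pair--dependent rescaling (they coincide up to a global constant only when $\sum_\ell p_{\ell i}$ is the same for all $i$). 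Carrying your coalescence interpretation through the final identity would therefore produce the wrong constants on general weighted directed graphs; the forward occupation-time derivation is what makes \cref{eq:psi} come out exactly as stated.
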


The intuition behind the quantities $\pi_i$, $\psi_{ij}$, and $\alpha(u_i)$ from~\cref{lem:weak_selection} is as follows:
Consider the Moran process on $G$ with $\FitAdv=0$.
Then for~$i\in[n]$, the value~$\pi_i$ is in fact the mutant fixation probability
starting from the initial configuration $\X_0=\{u_i\}$~\cite{Allen2021}.
Indeed, at any time step, the agent on~$u_i$ will eventually fixate if
(i)~$u_i$ is not replaced by its neighbors and eventually fixates from the next step onward, or
(ii)~$u_i$ spreads to a neighbor~$u_j$ and fixates from there.
The first event happens with rate $(1-\sum_{j\in[n]}p_{ji})\cdot\pi_i $, 
while the second event happens with rate $\sum_{j\in[n]} p_{ij}\cdot \pi_j$.
(We note that for undirected graphs, the system has an explicit solution
$\pi_i = \frac{\sfrac{1}{\deg(u_i)}}{\sum_{j\in[n]}\sfrac{1}{\deg(u_j)}}$~\cite{Broom2010}.)

The values $\psi_{ij}$ for
$i\ne j$
also have intuitive meaning:
They are equal to the expected total number of steps during the Moran process (with $\FitAdv=0$)
in which $u_i$ is mutant and $u_j$ is not, 
when starting from a random initial configuration $\Pr[\X_0=\{u_i\}]=1/n$.
To sketch the idea behind this claim,
suppose that in a single step, a node $u_\ell$ spreads to node $u_i$
(this happens with rate $p_{\ell i}$).
Then the event ``$u_i$ is mutant while $u_j$ is not'' holds
if and only if $u_\ell$ was mutant but $u_j$ was not, hence the product $p_{\ell i}\cdot \psi_{\ell j}$.
Similar reasoning applies to the product $p_{\ell j}\cdot \psi_{i\ell}$.
The term~$1$ in the numerator comes out of the $1/n$ probability that the initial mutant lands on $u_i$ (in which case indeed $u_i$ is mutant and $u_j$ is not).

Finally, the function $\alpha(\cdot)$ also has a natural interpretation:
The contribution of an active node~$u_i$ to the fixation probability grows vs.~$\FitAdv$ at $\FitAdv = 0$
to the extent that a mutant~$u_i$ is likely to be chosen for reproduction,
spread to a resident neighbor~$u_j$, and fixate from there;
that is, by the total time that~$u_i$ is mutant but a neighbor~$u_j$ is not~($\psi_{ij}$),
calibrated by the probability that, when chosen for reproduction,~$u_i$ propagates to that neighbor~($p_{ij}$),
who thereafter achieves mutant fixation~($\pi_j$);
this growth rate is summed over all neighbors and weighted by the rate~$1/n$ at which~$u_i$ reproduces when~$\FitAdv=0$.

\section{Experiments}\label{sec:experiments}

Here we report on an experimental evaluation of the proposed algorithms and some other heuristics. 
Our data set consists of~110 connected subgraphs of community graphs and social networks of the Stanford Network Analysis Project~\cite{Leskovec2014}.
These subgraphs where chosen randomly, and varied in size between 20-170 nodes.
Our evaluation is not aimed to be exhaustive, but rather to outline the practical performance of various heuristics, sometimes in relation to their theoretical guarantees.

In particular, we use 7 heuristics, each taking as input a graph, a budget $k$, and (optionally) the mutant fitness advantage $\FitAdv$.
The heuristics are motivated by our theoretical results and by the related fields of influence maximization and evolutionary graph theory.
\begin{figure*}[!ht]
\centerline{
\includegraphics[scale=0.75]{\mypath 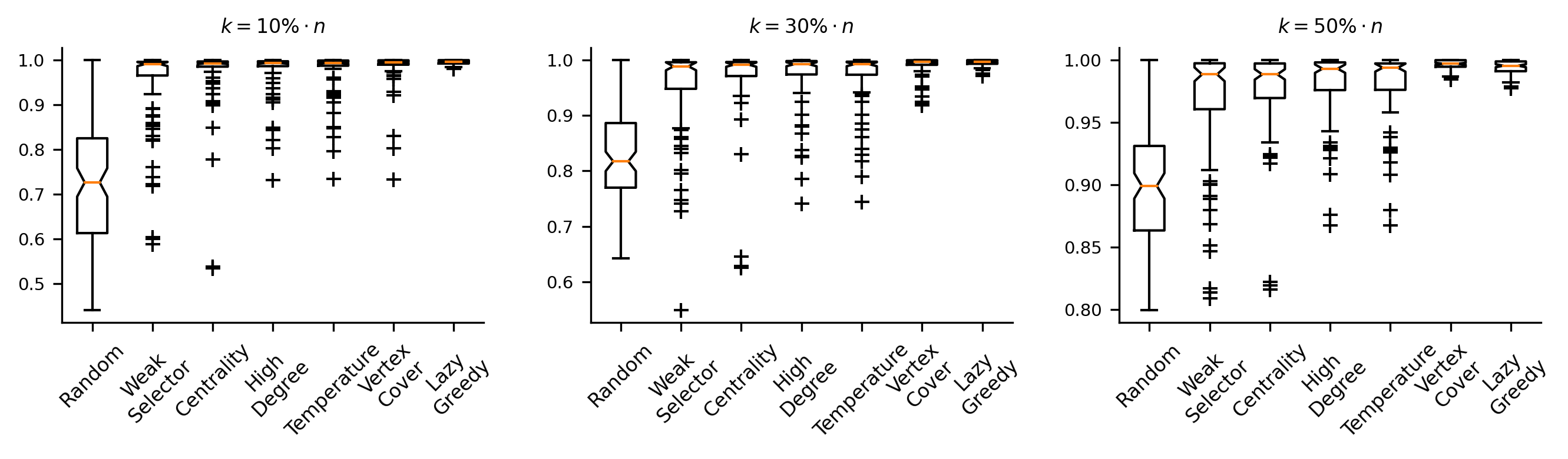}
}
\caption{Heuristic performance for $\NodeActivationMoranStrong$.}\label{fig:experiments_strong}
\end{figure*}

\begin{figure*}[!ht]
\includegraphics[scale=0.75]{\mypath 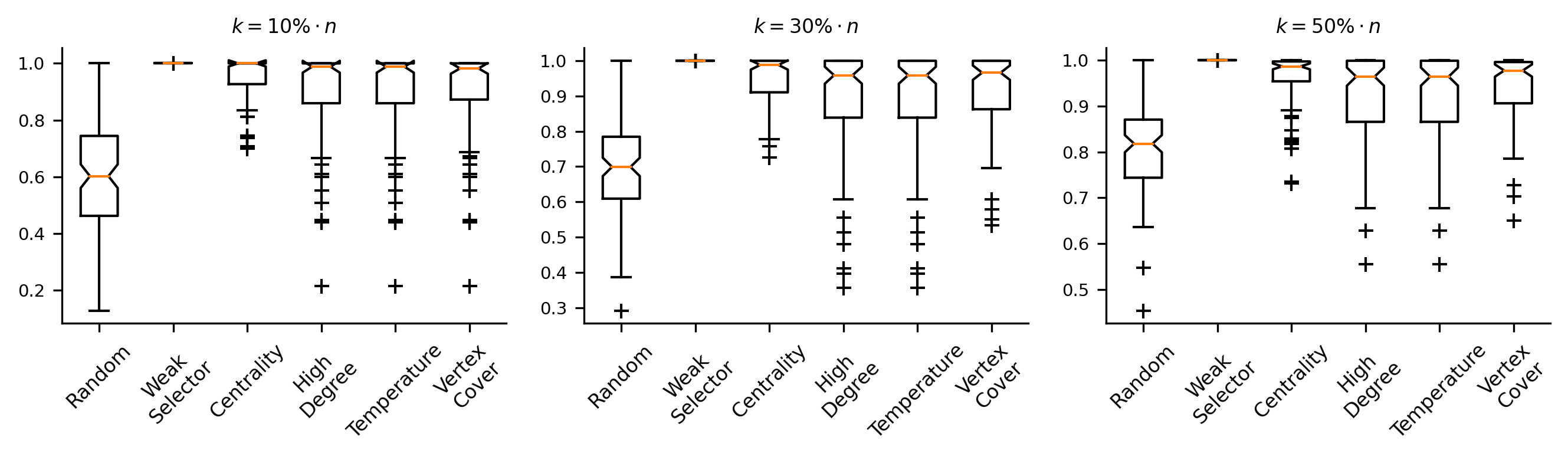}
\caption{Heuristic performance for $\NodeActivationMoranWeak$.}\label{fig:experiments_weak}
\end{figure*}

\begin{enumerate}
\item \emph{Random:}~Simply choose~$k$ nodes uniformly at random. This heuristic serves as a baseline.
\item \emph{High Degree:}~Choose the~$k$ nodes with the largest degree.
\item \emph{Centrality:}~Choose the~$k$ nodes with the largest betweenness centrality.
\item \emph{Temperature:}~The temperature of a node~$u$ is defined as $\Temp(u)=\sum_{v} \Weight(v,u)$; this heuristic chooses the~$k$ nodes with the largest temperature.
\item \emph{Vertex Cover:}~Motivated by \cref{lem:vertex_cover}, this heuristic attempts to maximize the number of edges with at least one endpoint in $S$.
In particular, given a set $A\subseteq V$, let 
\begin{align}
c(A)=\{ |(u,v)\in E\colon u\in A\text{ or }v\in A|  \}.
\end{align}
The heuristic greedily maximizes $c(\cdot)$, i.e.,
we start with $S=\emptyset$ and perform $k$ update steps
\begin{align}
S\gets S\cup  \argmax_{u\not \in S} c(S\cup\{u\}).
\end{align}
\item \emph{Weak Selector:}~Choose the~$k$ nodes that maximize $\dfp$ using the (optimal) weak-selection method.
\item \emph{Lazy Greedy:}~A simple greedy algorithm starts with~$S=\emptyset$, and in each step chooses the node~$u$ to add to~$S$ that maximizes the objective function.
For $\NodeActivationMoran(G,\FitAdv,k)$ and $\NodeActivationMoranStrong(G,k)$ this process requires to repeatedly evaluate $\fp(G^{S\cup\{u\}},\FitAdv)$ (or $\fp(G^{S\cup\{u\}},\FitAdv)$) for every node $u$.
This is done by simulating the process a large number of times (recall \cref{thm:approx_fix_prob}), and becomes a computational bottleneck when we require high precision. As a workaround we suggest a lazy variant of the greedy algorithm~\cite{Minoux1978}, which is faster but requires submodularity of the objective function. In effect, this algorithm is a correct implementation of the greedy heuristic in the limit of strong selection (recall \cref{lem:strong_selection_submodular}), while it may still perform well (but without guarantees) for finite~$\FitAdv$.
\end{enumerate}

In all cases ties are broken arbitrarily.
Note that the heuristics vary in the amount of information they have about the graph and the invasion process.
In particular, Random has no information whatsoever, High Degree only considers the direct neighbors of a node,
Temperature considers direct and distance-2 neighbors of a node, Centrality and Vertex Cover consider the whole graph, while Weak Selector and Lazy Greedy are the only ones informed about the Moran process.

For each graph $G$, we have chosen values of~$k$ corresponding to $10\%$, $30\%$ and $50\%$ of its nodes, and have evaluated the above heuristics in their ability to solve $\NodeActivationMoranStrong(G,K)$ (strong selection) and $\NodeActivationMoranWeak(G,k)$ (weak selection). 
We have not considered other values of~$\FitAdv$ as evaluating $\fp(G^S,\FitAdv)$ precisely via simulations requires many repetitions and becomes slow.

\Paragraph{Strong selection.}
We start with the case of $\NodeActivationMoranStrong(G,K)$.
Since different graphs $G$ and budgets $k$ generally result in highly variant fixation probabilities, in order to get an informative aggregate measure of each heuristic, we divide the fixation probability it obtains by the maximum fixation probability obtained across all heuristics for the same graph and budget. This normalization yields values in the interval $[0,1]$, and makes comparison straightforward. \cref{fig:experiments_strong} shows our results. 
We see that the Lazy Greedy algorithm performs best for small budgets ($10\%$, $30\%$), while its performance is matched by Vertex Cover for budget $50\%$.
The high performance of Lazy Greedy is expected given its theoretical guarantee (\cref{thm:storng_selection}).
We also observe that, apart from Weak Selector, the other heuristics perform quite well on many graphs, though there are several cases on which they fail, appearing as outlier points in the box plots. As expected, our baseline Random heuristic performs quite poorly; this result indicates that the node activation set~$S$ has a significant impact on the fixation probability. Finally, recall that Weak Selector is optimal for the regime of weak selection ($\FitAdv\to 0$). 
The fact that Weak Selector underperforms for strong selection ($\FitAdv\to \infty$) indicates an intricate relationship between fitness advantage and fixation probability.

\Paragraph{Weak selection.}
We collect the results on weak selection in \cref{fig:experiments_weak}, using the normalization process above.
Since the derivative satisfies $\fp'(G^S,0)=\sum_{u_i\in S} \alpha(u_i)$ (\cref{subsec:positive_weak}), Lazy Greedy is optimal and coincides with Weak Selector, and is thus omitted from the figure.
Naturally, the Weak Selector always outperforms others, while the Random heuristic is weak. 
The other heuristics have mediocre performance, with a clear advantage of Centrality over the rest, which becomes clearer for larger budget values.

\section{Conclusion}\label{sec:conclusion}

We introduced the positional Moran process and studied the associated fixation maximization problem. 
We have shown that the problem is~$\NPH$ in general, but becomes tractable in the limits of strong and weak selection. 
Our results only scratch the surface of this new process, as several interesting questions are open, such as: Does the strong-selection setting admit a better approximation than the one based on submodularity? 
Can the problem for finite~$\FitAdv$ be approximated within some constant-factor? Are there classes of graphs for which it becomes tractable?

\bibliography{\mypath bibliography}

\appendix
\section{Appendix}\label{sec:appendix}

\lemmonotonicity*
\begin{proof}
Fix pairs $(S,\FitAdv)$ and $(S',\FitAdv')$.
The key property is that, for any two configurations $\X\subseteq \X'$ and any node $u\in V$, we have
$\Fitness_{\X,S}(u)\le \Fitness_{\X',S'}(u)$: Indeed, the inequality is strict if and only if $u\in (\X'\setminus \X)\cap (S'\setminus S)$.
 As a consequence, Lemma~5 from~\cite{Diaz2016} applies and
yields a desired coupling between the 
processes with parameters $(S,\FitAdv)$ and $(S',\FitAdv')$.
\end{proof}

\lemfixprobstrong*
\begin{proof}
Due to \cref{lem:monotonicity}, it suffices to prove the statement for when $A=\{u\}$ is a singleton set.
Our proof is by a stochastic domination argument.
In particular, we construct a simple Markov chain $\MC$ and a coupling between $\MC$ and the Moran process on $G$ from $\X=\{u\}$ such that the probability of a random walk starting from a particular start state $s_u$ of $\MC$ has at least probability $\fp^{\infty}(G^S, A)$ of getting absorbed in a final state $s_n$.

Let $\ell=2+|\{(u,v)\in E\}|$, i.e., $\ell$ equals 2 plus the number of neighbors of $u$.
$\MC$ consists of the following set of states
 \[
S=\{s_0, s_{u},  s_{u^*}\} \cup \{s_i\colon \ell\leq i \leq n\}\ .
\]
The transition probability function $p\colon S\times S \to[0,1]$ is as follows, for constants $ c_1, c_2$ that depend on $G$ but not on $\FitAdv$.
\begin{align*}
&p(s_u, s_0)=\frac{c_1}{1+\FitAdv}  \qquad  p(s_u, s_{u^*}) = 1-p(s_u, s_0)\\
&p(s_{u^*}, s_{\ell}) = p(s_{i}, s_{i+1}) = c_2 \quad \text{for} \quad \ell \leq i < n\\
&p(s_{u^*}, s_u) = p(s_{i}, s_u)=1-c_2\quad \text{for} \quad \ell \leq i \leq n\\
\end{align*}
while for every state $s$, the remaining probability mass is added as a self-loop probability $p(s,s)$.

We now sketch the coupling between $\MC$ and the positional Moran process on $G$.
The coupling guarantees the following correspondence between the states of $\MC$ and the configuration $\X$ of the Moran process.
\begin{enumerate}
\item For the state $s_u$, we have $u\in \X$.
\item For the state $s_{u^*}$, we have (i)~$u\in X$, and (ii)~for every $(u,v)\in E$, we have $v\in X$.
\item For each state $s_i$, with $\ell \leq i \leq n$, we have (i)~$u\in X$, (ii)~for every $(u,v)\in E$, we have $v\in X$, and (iii)~$|\X|\geq i$.
\end{enumerate}
We argue that the transition probabilities preserve the above correspondence.
\begin{enumerate}
\item While $u\in \X$, the probability for a resident neighbor $v$ of $u$ to reproduce and place an offspring on $u$ is bounded by $\propto \frac{1}{1+\FitAdv}$.
On the other hand, the probability that $u$ reproduces in successive rounds until it places a mutant offspring to each of its neighbors can be exponentially small in $n$ but independent of $\FitAdv$, as, since $u\in S$, the fitness of $u$ is at least a fraction $1/n$ of the total population fitness regardless of the remaining nodes.
It follows that, from a configuration $\X$ with $u\in X$, the probability that $u$ turns resident before every neighbor of $v$ turns mutant is upper-bounded by $\frac{c}{1+\FitAdv}$, for some constant $c$ that depends on $G$ but not on $\FitAdv$.
\item Given any configuration $\X$ with $u\in \X$, the probability that a resident reproduces is at most $\propto \frac{1}{1+\FitAdv}$.
Similarly, the probability that a mutant reproduces and replaces a resident is at least $\propto \frac{1}{1+\FitAdv}$.
Thus, the probability that the latter event happens before the former event is lower-bounded by some constant $c_2$ that depends on $G$ but not on $\FitAdv$.
\end{enumerate}
Thus, we have a coupling in which the probability that a random walk in $\MC$ starting at $s_u$ gets absorbed in $s_n$ is at least as large as $\fp^{\infty}(G^S, \{u\})$.
It is straightforward to verify that the probability of the former event tends to $1$ as $\FitAdv\to \infty$.
Thus $\fp^{\infty}(G^S, \{u\})$, as desired.
\end{proof}

\thmnphard*
\begin{proof}
Following \cref{lem:vertex_cover}, the hardness for $\NodeActivationMoranStrong$ follows directly by reducing the question of whether a graph $G$ has a vertex cover of of size $k$ to the problem of whether $\NodeActivationMoranStrong$ can achieve fixation probability of at least $\frac{n+k}{2n}$.
For the hardness of $\NodeActivationMoran$, recall that for any $S\subseteq V$, the function $\fp(G^S,\FitAdv)$ is continuous on $\FitAdv$.
A crude but slightly more detailed analysis in the proof of \cref{lem:vertex_cover} shows that if we have an edge $(u,v)\in E$ with $u,v\not \in S$, then the fixation probability from $u$ is
\[
\fp^{\infty}(G^{S}, \{u\})< \frac{1}{2} - \frac{1}{2n^4} \ .
\]
In turn, if $S$ is not a vertex cover of $G$ then
\[
\fp^{\infty}(G^{S})<c=\frac{n+|S|}{2n}-\frac{1}{2n^5}
\] 
Due to the continuity of $\fp(G^S,\FitAdv)$, there exists a large enough $\FitAdv^*$ such that if $G$ has a vertex cover $S$ of size $k$ then $\fp(G^S,\FitAdv^*)\geq c$.
On the other hand, due to the monotonicity (\cref{lem:monotonicity}), 
if $G$ has no vertex cover of size $k$, then we have
\[
\max_{S\subseteq V, |S|=k}\fp(G^S,\FitAdv^*) \leq \max_{S\subseteq V, |S|=k} \fp^{\infty}(G^S) < c\ .
\]
Thus for such a fitness advantage $\FitAdv^*$, we can achieve fixation probability $c$ iff $G$ has a vertex cover of size $k$.
\end{proof}

\thmapproxfixprob*
\begin{proof}
We follow the proof strategy of Theorem~11 from~\cite{Diaz2014}.
Given a set $\X$ of nodes occupied by mutants, consider a potential function defined by
$\phi(\X)=\sum_{u\in\X} \frac1{\deg(u)}$, where $\deg(u)$ is the number of edges incident to $u$.
Note that $0\le \phi(\X)\le n$, since $0\le |\X|\le n$ and for each $u\in V$ we have $\deg(u)\ge 1$.
Moreover, given a configuration $\X_t$ at time-point $t$, let $\Delta_t=\phi(\X_{t+1})-\phi(\X_t)$ be a random variable that measures the increase of the potential function $\phi$ in a single step.
We make two claims:
\begin{enumerate}

\item\label{itm:fpras1}
$\E[\Delta_t\mid X_t]\ge 0$: Let $S$ be a set of edges $(u,v)$ such that $u$ is occupied by a mutant and $v$ by a resident, and let $F=\sum_{u\in V} \Fitness_{\X_t,S}(u)$ be the total fitness of the population.
In a single step, the set of nodes occupied by mutants changes either by a mutant replacing a resident neighbor, or by a resident replacing a mutant neighbor. Thus
\begin{align*}
\E[\Delta_t\mid X_t] =&\sum_{(u,v)\in S} \frac{\Fitness_{\X_t,S}(u)}{F}\cdot\frac1{\deg(u)}\cdot\frac{+1}{\deg(v)}\\
&+\sum_{(u,v)\in S}\frac{\Fitness_{\X_t,S}(v)}{F}\cdot\frac1{\deg(v)}\cdot\frac{-1}{\deg(u)} \\
 =& \sum_{(u,v)\in S} \frac{\Fitness_{\X_t,S}(u)-\Fitness_{\X_t,S}(v)}{F\deg(u)\deg(v)}\ge 0,
 \end{align*}
 where the last inequality holds since the fitness of any mutant is at least as large as that of any neighboring resident.

\item\label{itm:fpras2}
If $\emptyset\subsetneq \X_t \subsetneq V$ then $\Pr[\Delta_t\le -1/n \mid \X_t]\ge \frac1{(1+\FitAdv)n^2}$:
If the population is not homogeneous then there exists at least one edge $(u,v)\in E$ with $u\not\in\X_t$ and $v\in\X_t$.
With probability $p_u=\Fitness_{\X_t,S}(u)/F\ge\frac1{(1+\FitAdv)n}$ the agent at $u$ is selected for reproduction and with probability $p_{u\to v}=1/\deg(u)\ge 1/n$ its offspring migrates to $v$, which changes the potential function by $-1/\deg(v)\le -1/n$.
\end{enumerate}

By~\cref{itm:fpras1}, the potential function $\phi$ gives rise to a sub-martingale.
Moreover, the function $\phi$ is bounded by $B=n$ and
by~\cref{itm:fpras2}, we have $V=\operatorname{Var}[\phi(\X_{t+1})\mid \X_t]\ge p_u\cdot p_{u\to v}\cdot (1/n)^2 \ge  \frac1{(1+\FitAdv)n^4}$.
Thus, the standard martingale machinery of drift analysis applies~\cite{kotzing2019first}.
Namely, the rescaled function $\phi\cdot (\phi-2B)+B^2$ satisfies the conditions
of the upper additive drift theorem (with initial value at most $B^2$ and step-wise drift at least $V$).
The expected time until termination is thus at most
\[\ft(G^S,\FitAdv) \le \frac{B^2}{V} \le (1+\FitAdv)n^6. \qedhere
\]
\end{proof}

\thmweakselection*
\begin{proof}

Given a set $S$, we  write $\lambda_i=1$ to indicate that $u_i\in S$ (and $\lambda_i=0$ otherwise).
Given any configuration $\X\subset V$,
we write $x_i=1$ to indicate that $u_i\in\X$ (and $x_i=0$ otherwise).
We also define a function $\phi(\X)= \sum_{i\in[n]} \pi_i \cdot x_i$
and we let
\begin{align}
\Delta(\X,\FitAdv)=\E[\phi(\X_{t+1})-\phi(\X_t)\mid \X_t=\X]
\end{align}
be its expected change in a single step of the Moran process with fitness advantage $\FitAdv$ at nodes $S$.
Let $\overline{x}=\frac{1}{n}\sum_{i\in[n]}\lambda_ix_i$.
Then we can write
\begin{align}
\Delta(\X,\FitAdv) = \sum_{i,j\in[n]} \frac{1+\FitAdv \cdot \lambda_ix_i}{n(1+\FitAdv\cdot \overline{x})} p_{ij}(x_i-x_j)\pi_j.
\end{align}
Finally, the derivative of $\Delta(\X,\FitAdv)$ at $\FitAdv=0$ is
\begin{align*}
\Delta'(\X)  &=\frac{\d}{\d\FitAdv}\Bigr|_{\substack{\FitAdv=0}}  \Delta(\X,\FitAdv)\\
 &=\frac1n \sum_{i,j\in[n]} (\lambda_i \cdot  x_i - \overline{x}) \cdot     p_{ij}\cdot(x_i-x_j)\cdot \pi_j \\
 &=\frac1n \sum_{i,j\in[n]} \lambda_i \cdot x_i \cdot p_{ij}\cdot (x_i-x_j)\cdot \pi_j \\
&\qquad -\frac{\overline{x}}{n} \sum_{i,j\in[n]} p_{ij}\cdot(x_i-x_j)\cdot\pi_j\\
&=\frac1n\sum_{i,j\in[n]} \lambda_i \cdot p_{ij}\cdot \pi_j\cdot x_i(1-x_j),
\numberthis
\end{align*}
where in the last equality we used that $x_i^2=x_i$ and that the second sum vanishes, since all the terms that involve any fixed $x_i$ sum up to $x_i\cdot\sum_{j\in[n]} \left(p_{ij}\cdot\pi_j - p_{ji}\cdot\pi_i\right) =0$ (due to \cref{eq:pi2}).

Now consider the neutral positional Moran process ($\FitAdv=0$) starting with
$\Pr[\X_0=\{u_i\}]=1/n$ for $i\in[n]$.
For any $i,j\in[n]$ and $t\ge 0$, consider an
event $Y_{ij}(t)$ defined as ``at time $t$ we have $u_i\in \X_t$ and $u_j\not\in\X_t$''
and let $\psi_{ij}=\sum_{t=0}^\infty \Pr[Y_{ij}(t)]$ be the total expected time spent with $u_i$ being a mutant and $u_j$ not.
Then by~\cite[Theorem 1]{Mcavoy2021} we have
\begin{align*}
\dfp 
&= \sum_{t=0}^\infty  \sum_{\X\subset V} \Pr[\X_t=\X]\cdot \Delta'(\X) \\
 &= \frac{1}{n}\sum_{i,j\in[n]}\lambda_i \cdot p_{ij} \cdot \pi_j \cdot \psi_{ij}
    = \sum_{u_i\in S}  \alpha (u_i),
\end{align*}

It remains to show that the quantities $\psi_{ij}$ satisfy the linear system of \cref{eq:psi}.
For $i=j$ we clearly have $\psi_{ii}=0$.
For $i\neq j$ we write
\begin{align}\label{eqn:a}
\psi_{ij}  &=\Pr[Y_{ij}(0)] + \sum_{t=0}^{\infty}\Pr[Y_{ij}(t+1)].
\end{align}
The first term is simply $1/n$.
Let $e_{ij}=\frac{1}{n}p_{ij}$ be the probability that, in a single step of the standard Moran process ($\FitAdv=0$), node $u_i$ is selected for reproduction and places its offspring on $u_j$. 
Then we can rewrite each term of the sum in terms of events at time $t$ as follows:
\begin{align*}
\Pr[Y_{ij}(t+1)] &= \sum_{l\in[n]} e_{li}\cdot \Pr[Y_{lj}(t)]+  \sum_{l\in[n]} e_{lj}\cdot \Pr[Y_{il}(t)]\\
&+\bigg(1-\sum_{l\in[n]} \big(e_{li} + e_{lj}\big)\bigg)\cdot \Pr[Y_{ij}(t)].
\end{align*}
Summing over $t$ and plugging this into~\cref{eqn:a} we get
\begin{align*}
 \psi_{ij}= \frac{ \frac1n + \sum_{l\in[n]} e_{li}\cdot \psi_{lj} +  \sum_{l\in[n]} e_{lj}\cdot \psi_{il} }
 {  \sum_{l\in[n]} e_{li}+ \sum_{l\in[n]} e_{lj}  },
\end{align*}
thus we arrive at the the linear system of \cref{eq:psi}.

This concludes the proof.
\end{proof}

\paragraph{Graph from weak-selection experiments.}
\cref{fig:weak_graph} illustrates a challenging graph for $\NodeActivationMoranWeak$, along with the activation choices of different heuristics.

\begin{figure}[!ht]
\includegraphics[scale=0.9]{\mypath 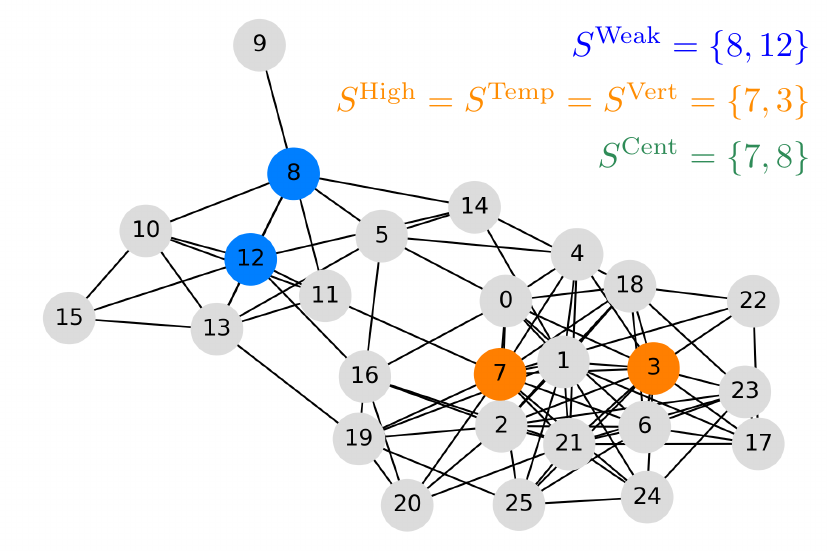}
\caption{
Node activation for weak selection on a challenging graph of $n=26$ nodes,
that shows as an outlier in \cref{fig:experiments_weak}.
The budget is $k=\lfloor 10\% \cdot n \rfloor=2$.
The choices of the different heuristics are shown.
}
\label{fig:weak_graph}
\end{figure}

\end{document}